\documentclass[conference,a4paper]{IEEEtran}

\usepackage[
  bottom=4.365cm, top=2cm, right =1.45cm, left = 1.45cm
]{geometry}

\usepackage[utf8]{inputenc} 
\usepackage[T1]{fontenc}
\usepackage{url}
\usepackage{ifthen}
\usepackage{cite}
\usepackage[cmex10]{amsmath} 
                             
\usepackage{amssymb}
\usepackage[nolist]{acronym} 
\usepackage{nicefrac}
\usepackage{xcolor}
\usepackage[normalem]{ulem}

\usepackage{booktabs}
\usepackage{nicematrix}
\usepackage{mathtools}

\usepackage{graphicx}

\usepackage[hidelinks]{hyperref}
\usepackage[capitalise]{cleveref}
\crefname{construction}{Construction}{Constructions}
\crefname{proposition}{Proposition}{Propositions}
\crefname{algorithm}{Algorithm}{Algorithm}
\crefname{property}{Property}{Property}
\crefname{calc}{Calculation Rule}{Calculation Rules}
\crefname{subsection}{Section}{Sections}
\Crefname{subsection}{Section}{Sections}

\usepackage{amsthm}

\newtheorem{proposition}{Proposition}

\newtheorem{lemma}{Lemma}

\theoremstyle{definition}
\newtheorem{definition}{Definition}

\newtheorem{example}{Example}
\newtheorem{calc}{Calculation Rule}
\newtheorem{property}{Property}

\theoremstyle{remark}
\newtheorem*{remark}{Remark}

\usepackage{tikz}
\usetikzlibrary{arrows,calc,shapes,graphs,graphs.standard,quotes,arrows.meta,decorations.markings,positioning,fit,positioning,hobby,decorations.text,decorations.pathreplacing, patterns}
\usetikzlibrary{spy}

\newenvironment{writereplace}{
    \begingroup
    \let\writeold=\write
    \def\write{\immediate\writeold}
}{
    \endgroup
}%

\usepackage{pgfplots}
\pgfplotsset{compat=1.18}
\usepgfplotslibrary{groupplots}

\usepackage[
    linesnumbered,
    titlenumbered,
    ruled,
    vlined,
    resetcount
]{algorithm2e}

\makeatletter
    \patchcmd\algocf@Vline{\vrule}{\vrule \kern-0.4pt}{}{}
    \patchcmd\algocf@Vsline{\vrule}{\vrule \kern-0.4pt}{}{}
    \patchcmd{\@algocf@start}{-1.5em}{-1.25em}{}{}
\makeatother

\SetAlFnt{\small}
\SetAlCapNameFnt{\footnotesize}
\SetAlCapFnt{\footnotesize}

\AtBeginEnvironment{algorithm}{\SetAlgoLined}
\SetKwInOut{Input}{\small Input}
\SetKwInOut{Output}{\small Output}

\renewcommand{\mid}{\,|\,}

\newcommand{\ml}{_{\mathrm{ML}}}
\newcommand{\lw}{_{\mathrm{LW}}}

\newcommand{\vc}{\boldsymbol{c}}

\newcommand{\ve}{\boldsymbol{e}}

\newcommand{\vu}{\boldsymbol{u}}
\newcommand{\vv}{\boldsymbol{v}}

\newcommand{\vy}{\boldsymbol{y}}

\newcommand{\calE}{\mathcal{E}}

\newcommand{\calI}{\mathcal{I}}
\newcommand{\calJ}{\mathcal{J}}
\newcommand{\calK}{\mathcal{K}}

\newcommand{\calP}{\mathcal{P}}

\newcommand{\calR}{\mathcal{R}}
\newcommand{\calS}{\mathcal{S}}

\newcommand{\Px}{P}

\newcommand{\Cod}{\mathcal{C}}

\newcommand{\dmin}{d_{\mathrm{min}}}
\newcommand{\ddes}{d_{\mathrm{des}}}
\newcommand{\List}{\mathcal{L}}

\newcommand{\Ind}{{\mathcal{I}}}
\newcommand{\Kcal}{\mathcal{K}}
\newcommand{\one}{\boldsymbol{1}}
\newcommand{\zero}{\boldsymbol{0}}
\newcommand{\pws}{\calP}

\newcommand{\vrho}{\boldsymbol{\rho}}

\newcommand{\Pcy}{\Px}

\newcommand{\chd}{\hat{y}}

\newcommand{\vchat}{\boldsymbol{\hat{c}}}
\newcommand{\vcprime}{\boldsymbol{c^\prime}}

\newcommand{\vchd}{\boldsymbol{\hat{y}}}

\newcommand{\vcml}{\vc\ml} %

\newcommand{\Ball}{\mathcal{B}}

\newcommand{\Ftwo}{\mathbb{F}_{2}}

\newcommand{\Vcov}{\mathcal{V}}

\newcommand{\ttest}{\boldsymbol{\tau}}
\newcommand{\valpha}{\boldsymbol{\alpha}}

\newcommand{\Test}{\mathcal{T}}

\newcommand{\Nat}{\mathbb{N}}

\newcommand{\fcor}{f_{\mathrm{c}}}
\newcommand{\ferr}{f_{\mathrm{e}}}
\newcommand{\Fcor}{F_{\mathrm{c}}}
\newcommand{\Ferr}{F_{\mathrm{e}}}

\newcommand{\betaos}[1]{\beta_{b,(#1)}}
\newcommand{\betabos}[2]{\beta_{#1,(#2)}}
\newcommand{\gammaos}[1]{\gamma_{n-b,(#1)}}
\newcommand{\gammabos}[2]{\gamma_{n-#1,(#2)}}
\newcommand{\gammanbos}[2]{\gamma_{#1,(#2)}}
\newcommand{\vbeta}{\boldsymbol{\beta}}
\newcommand{\vgamma}{\boldsymbol{\gamma}}

\newcommand{\vell}{\boldsymbol{\ell}}

\newcommand{\Pat}{\calR}
\newcommand{\pat}{\vrho}
\newcommand{\patprime}{\boldsymbol{\rho^\prime}}
\newcommand{\Patprime}{\Pat^\prime}
\newcommand{\wmax}{w_{\mathrm{H}}^{\mathrm{max}}}
\newcommand{\wlw}{w_\mathrm{L}}
\newcommand{\wham}{w_{\mathrm{H}}}
\newcommand{\dham}{d_\mathrm{H}}
\newcommand{\Pna}{P}
\newcommand{\Pbdd}{P_{\mathrm{BDD}}}

\newcommand{\Patchase}{\Pat\chase}
\newcommand{\chase}{_{\mathrm{Chase}}}
\newcommand{\reschase}{_{\mathrm{restr}}}
\newcommand{\Patlw}{\Pat\lw}
\newcommand{\Patlwprec}{\Patlw^{\prec\pat}}
\newcommand{\npat}{q}
\newcommand{\imax}{i^{\mathrm{max}}}

\newcommand{\Pcl}{P_{\Pat}} %
\newcommand{\Pcov}{P_\mathrm{cov}}
\newcommand{\Padd}{P^{\mathrm{add}}}

\newcommand{\MLP}{\operatorname{OCP}}

\newcommand{\Scond}{\calS}
\newcommand{\df}{\delta_{\Scond}}
\newcommand{\Icond}{\calI_{\Scond}}
\newcommand{\Jcond}{\calJ_{\Scond}}
\newcommand{\Rcond}[1]{\mathbb{R}^{#1}_{\geq 0,\,\Scond}}

\newcommand{\qsigma}{q_{\sigma}}
\newcommand{\Qsigma}{Q_{\sigma}}

\newcommand{\tikzgets}{\tikz[baseline=-0.5ex] \draw[-{Straight Barb[scale=0.8]}, yshift=0.05ex] (0.3,0) -- (0,0);}
\renewcommand{\gets}{\mathrel{\tikzgets}}

\DeclareMathOperator*{\argmax}{arg\,max}
\DeclareMathOperator*{\argmin}{arg\,min}

\SetKwProg{Fn}{subroutine}{:}{end}
\let\oldnl\nl%
\newcommand{\nonl}{\renewcommand{\nl}{\let\nl\oldnl}}%
\newcommand{\algrule}[1][.2pt]{\par\vskip.5\baselineskip\hrule height #1\par\vskip.5\baselineskip}

\definecolor{mittelblau}{RGB}{0, 81, 158} %
\definecolor{anthrazit}{RGB}{62, 68, 76} %
\definecolor{hellblau}{RGB}{0, 190, 255} %
\definecolor{violettblau}{cmyk}{0.9, 0.6, 0, 0}
\definecolor{rot}{RGB}{238, 28 35}
\definecolor{apfelgruen}{RGB}{140, 198, 62}
\definecolor{gelb}{RGB}{255, 229, 0}
\definecolor{orange}{RGB}{244, 111, 33}
\definecolor{pink}{RGB}{237, 0, 140}
\definecolor{lila}{RGB}{128, 10, 145}
\definecolor{hellgrau}{RGB}{224, 224, 224}
\definecolor{mittelgrau}{RGB}{128, 128, 128}
\definecolor{dunkelgrau}{RGB}{80,80,80}
\definecolor{darkapfelgruen}{RGB}{100, 140, 45}

\def\colorChase{apfelgruen}
\def\colorRC{hellblau}
\def\colorLW{violettblau}
\def\colorMCOC{rot}
\def\colorGreedy{black}

\def\markChase{triangle}
\def\markRC{triangle}
\def\markLW{diamond}
\def\markMCOC{asterisk}
\def\markGreedy{square}

\def\markChase{triangle}
\def\markRC{triangle}
\def\markLW{Mercedes star}
\def\markMCOC{Mercedes star flipped}
\def\markGreedy{square}
\def\markML{diamond}

\def\lineStyleA{densely dotted}
\def\lineStyleB{densely dashed}
\def\lineStyleC{densely dashdotted}

\begin{acronym}
\acro{AWGN}{additive white Gaussian noise}
\acro{BDD}{bounded-distance decoder}
\acro{BCH}{Bose--Ray-Chaudhuri--Hocquenghem}
\acro{BI-AWGN}{binary-input additive white Gaussian noise}
\acro{BLER}{block error rate}
\acro{BPSK}{binary phase shift keying}
\acro{CDF}{cumulative distribution function}
\acro{GMD}{generalized minimum distance}
\acro{GRAND}{guessing random additive noise decoding}
\acro{HDD}{hard-decision decoding}
\acro{LER}{list error rate}
\acro{LLR}{log-likelihood ratio}
\acro{LW}{logistic weight}
\acro{MCOC}{maximally covering ordered candidate}
\acro{ML}{maximum likelihood}
\acro{RS}{Reed--Solomon}
\acro{TPD}{turbo product decoding}
\acro{SNR}{signal-to-noise-ratio}
\end{acronym}

\begin{document}

\title{Chase-like Decoding: Test-Pattern Design \\and Performance Analysis} 

\author{\IEEEauthorblockN{Tim Janz, Simon Obermüller, Andreas Zunker, and Stephan ten Brink}
\IEEEauthorblockA{Institute of Telecommunications, 
University of Stuttgart, Germany\\
\{janz, obermueller, zunker, tenbrink\}@inue.uni-stuttgart.de}
}

\maketitle

\begin{abstract}
    Chase-like decoding algorithms are a popular choice for soft-input decoding of algebraic codes. We evaluate different test-pattern sets for Chase-like decoding. Structured sets, such as Chase-II patterns or patterns chosen by logistic weight, are analyzed using order statistics, while arbitrary sets are evaluated by calculating covered-space probabilities and by performing Monte Carlo simulation. We further propose an algorithm that designs test-pattern sets to cover likely error patterns, achieving comparable performance with half the number of test patterns compared with conventional sets for high-rate BCH codes.
\end{abstract}

\section{Introduction}
Due to their good minimum distance and efficient decoding, binary \ac{BCH} codes \cite{hocquenghem1959,bose1960} are widely used in communication systems \cite{lin2004}. They are often employed in concatenated coding schemes \cite{forney1966a} to improve performance with limited additional latency and complexity. Examples include data-center interconnects and the next Ethernet standard, which uses a KP4 (544,514) \ac{RS} outer code \cite{matuz2025}. In long-haul optical communications, \ac{BCH} codes are also used as component codes of generalized product codes \cite{smith2012a,sukmadji2022,shehadeh2025}, such as oFEC \cite{ofec2023}.

For \ac{BCH} codes, \acl{HDD} with a \ac{BDD} exhibits a substantial performance gap to \ac{ML} decoding. Soft-input decoding can reduce this gap, either with hard output for concatenated schemes with hard-input outer codes or with soft output for iterative component decoding, as in \ac{TPD} \cite{pyndiah1998}.

In both settings, the Chase-II algorithm is widely used \cite{chase1972}.
 Given a channel output, it generates a list of candidate codewords $\List$ by applying a \ac{BDD} to $2^p$ testwords. 
In the context of \ac{GRAND}, patterns based on their logistic weight were introduced \cite{duffy2022}.
They can also be used as test patterns in Chase-like decoding and achieve good performance \cite{shen2025}.
Few works consider alternative test-pattern sets. 
In \cite{tokushige2003} and \cite{tokushige2010}, covering codes \cite{cohen1997} are used to design test patterns for medium-rate \ac{BCH} codes, while \cite{cardinale2026} recently proposed a heuristic design for high-rate \ac{BCH} codes similar to the one presented here.

The performance of Chase-II and \ac{GMD} decoding \cite{forney1966} has been analyzed in \cite{agrawal2000,fossorier2002,he2025}. However, a comprehensive comparison of different test-pattern sets, particularly for high-rate \ac{BCH} codes, is, to the best of our knowledge, still missing.

In this paper, we present three methods to determine the \ac{LER}, i.e., the probability that the sent codeword is not in the list generated by the Chase-like decoder, and show that they yield equivalent results.
The first method, based on order statistics, was introduced in \cite{fossorier2002} building on \cite{agrawal2000}. We extend it to Chase-II patterns with restricted Hamming weight and patterns bounded by logistic weight. The second method uses the covered space proposed in \cite{janz2025} to compute the \ac{LER} for arbitrary test-pattern sets, while the third is a Monte Carlo simulation as a reference.
Furthermore, we propose an algorithm for designing efficient test-pattern sets that slightly outperform logistic-weight patterns. Compared with commonly used Chase-II patterns of restricted Hamming weight, the designed sets require less than half as many test patterns to achieve the same performance.

\section{Preliminaries}

Let $[n]\triangleq\{1,2,\ldots,n\}$ and $[m,n]\triangleq\{m,m+1,\ldots,n\}$ for $m<n$ with $n \in \Nat$ and $m\in \Nat_0$. 
We denote the power set of an index set $\Kcal \subseteq [n]$ as $\pws_{\Kcal}$ with $|\pws_{\Kcal}|=2^{|\Kcal|}$.
The subvector of $\vv \in \Ftwo^n$ containing only the elements at positions $\Kcal \subseteq [n]$ of $\vv$ is given by $\vv_\Kcal\in\Ftwo^{|\Kcal|}$.
The vector of length $n$ with ones at positions $\Kcal\subseteq [n]$ and zeros at positions $[n]\setminus \Kcal$ is denoted by $\one_{\Kcal}$. For a vector $\vu\in\Ftwo^n$, the set of positions equal to one is $\calI_{\vu}$ and $\imax_{\vu}=\max(\calI_{\vu})$.
Let $\vu \oplus \vv$ be the element-wise addition over $\Ftwo$ for two vectors $\vu,\vv\in\Ftwo^n$. The Hamming weight of a vector $\vu\in\Ftwo^n$ and the Hamming distance of the vectors $\vu,\vv\in\Ftwo^n$ are denoted by $\wham(\vu)$ and $\dham(\vu,\vv)$, respectively.
Let ${\Ball_r(\vu) \triangleq \{\vv \in \Ftwo^n \mid d_\mathrm{H}(\vu, \vv) \leq r\}}$ be the Hamming ball of radius $r$ that is centered at $\vu$.
The logistic weight of a vector $\vu$ is defined as $\wlw(\vu) = \sum_{i=1}^{n} iu_i $.

\subsection{Channel Model}\label{subsec:channel}
We consider transmission over a \ac{BI-AWGN}  channel $Y=X+Z$, where ${Z\sim \mathcal{N}(0,\sigma^2)}$ and ${X\in\{+1,-1\}}$  is the \ac{BPSK} modulated channel input with ${0\mapsto+1}$ and $1\mapsto-1$. 
As channel codes, we use $(n,k,t)$ primitive, narrow-sense, binary \ac{BCH} codes of length $n$, dimension $k$ and error correction capability ${t=\left\lfloor\frac{\ddes-1}{2}\right\rfloor}$, where the design distance $\ddes$ is a lower bound on the minimum distance $\dmin$. 

For a given channel output $\vy\in \mathbb{R}^n$, we can compute the channel \ac{LLR} as ${\vell  = \frac{2}{\sigma^2} \, \vy}$ and define the reliability as $\valpha\triangleq|\vell|$. Throughout this paper, we assume that the received channel output is already sorted with respect to the corresponding reliabilities such that $\alpha_1\leq\alpha_2\leq\ldots\leq \alpha_n$. 
The hard decision is denoted by $\vchd$, where $\chd_i=1$ for $y_i<0$ and $\chd_i=0$ otherwise.

For a transmitted codeword $\vc\in\Ftwo^n$, the error vector is given by $\ve=\vc\oplus\vchd$.
Let $\calE_b$ denote the event that the hard decision is erroneous at $b$ positions.
For the \ac{BI-AWGN} channel, the probability of $b$ bit flips is
\begin{equation}\label{eq:prob_i_flips}
  P(\calE_b) = \binom{n}{b} Q(1/\sigma)^{b} [1-Q(1/\sigma)]^{(n-b)} ,
\end{equation}
with $Q(\cdot)$ being the tail distribution function
of $\mathcal{N}(0,1)$.

For a \ac{BDD} with error correction capability $t$, the probability that the transmitted codeword is decoded correctly can be calculated~by
\begin{equation}\label{eq:p_bdd}
  \Pbdd \triangleq\Pbdd(\vchat = \vc) = \sum_{i=0}^t P(\calE_i).
\end{equation}

\subsection{Chase-like Decoding}

A popular approach for incorporating soft information into hard-decision decoding is the Chase-II algorithm \cite{chase1972}.
More generally, a Chase-like algorithm with a predetermined test-pattern set $\Pat$ operates as follows.

First, the hard-decision vector $\vchd = (\chd_1, \ldots, \chd_n)$ is formed. 
The corresponding set of testwords is $\Test = \left\{ \vchd \oplus \pat \;\middle|\; \pat \in \Pat \right\}$, where $\Pat$ is the set of test patterns with cardinality $\npat \triangleq |\Pat|$. 
Each testword $\ttest \in \Test$ is then decoded by a \ac{BDD}, and the obtained unique codewords form the candidate list $\List$.
For hard-output decoding, the most likely codeword is selected as $\vchat = \arg\max_{\vcprime \in \List} \Pcy(\vcprime \mid \vy)$.
Alternatively, the remaining list candidates can be used to generate soft outputs \cite{pyndiah1998,janz2025}.

Chase-like algorithms differ in the choice of the test-pattern set $\Pat$.
The Chase-II algorithm considers all $2^p$ bit flips within the $p$ least reliable positions with 
\begin{equation*}
    \Patchase = \left\{\one_{\calK} \;\middle|\; \calK \in \pws_{[p]}\right\}.
\end{equation*}
A common alternative that restricts the patterns to at most $\wmax$ bits among the $p$ least reliable positions is the \emph{restricted Chase pattern set} given by 
\begin{equation*}
    \Pat\reschase = \left\{\one_{\calK} \;\middle|\; \calK \in \pws_{[p]},\, |\calK| \leq \wmax \right\}.
\end{equation*}

Recently, patterns ordered by increasing \ac{LW} were proposed for \ac{GRAND} \cite{duffy2022}.
We refer to them as \emph{logistic-weight patterns} $\Patlw$ and consider them in Subsection~\ref{subsec:LER_OS} in detail.

\subsection{Error Probability of Chase-like Decoding}

The error probability of a Chase-like decoding algorithm with test-pattern set $\Pat\subseteq\Ftwo^n$ can be decomposed as
\begin{align}\label{eq:bler_bound}
    \Pcl(\vchat \neq \vc) 
    &= \Pcl ( \vchat\neq \vc, \vc\in \List) + \Pcl ( \vchat\neq \vc, \vc\notin \List)  \nonumber\\
    &\vphantom{\overset{\raisebox{0pt}[\height][2pt]{\smash{\scalebox{0.7}{$\mathrm{(b)}$}}}}{\leq}}
    \overset{\raisebox{0pt}[\height][2pt]{\smash{\scalebox{0.7}{$\mathrm{(a)}$}}}}{=} \Pcl ( \vchat\neq \vc, \vc\in \List) + \Pcl ( \vc\notin \List)  \nonumber\\
    &\overset{\raisebox{0pt}[\height][2pt]{\smash{\scalebox{0.7}{$\mathrm{(b)}$}}}}{\leq} P(\vcml \neq \vc) + \Pcl ( \vc\notin \List),
\end{align}
where $\mathrm{(a)}$ follows since $\vchat\in\List$ always holds, and $\mathrm{(b)}$ follows because the term $\Pcl ( \vchat\neq \vc, \vc\in \List)$ is maximized for $\List=\Cod$, yielding the \ac{BLER} of \ac{ML} decoding.
We call $\Pcl ( \vc\notin \List)$ the \acf{LER}, which provides a tight approximation of the \ac{BLER} for $P (\vcml \neq \vc) \ll \Pcl (\vc \notin \List)$ \cite{agrawal2000, fossorier2002}.

\subsection{Covered Space Probability Calculation}
For the probability $\Pna(\vv \mid \vy)$ that a vector $\vv$ was transmitted given the received vector $\vy$, without considering codebook information, we obtain
\begin{equation*}
    \Pna(\vv\mid\vy) = 
    \prod\limits_{i:\,v_i=\chd_i}\gamma_i
    \cdot\prod\limits_{i:\,v_i\neq\chd_i} (1-\gamma_i),
\end{equation*}
with $\gamma_i = \Pna(v_i=\chd_i\mid y_i) = \frac{1}{1+e^{-\ell_i}}$.
This probability should not be confused with the codebook-aware posterior probability $P(\vc\mid \vy)$, which is only non-zero for $\vc\in \Cod$ and not considered in this paper.

The subset of the ambient space $\Ftwo^n$ that is explored for codewords by a Chase-like algorithm is called the \emph{covered space} $\Vcov$ as in  \cite{janz2025}. For Chase-like decoding, it is given by
\begin{equation*}    
    \Vcov = \bigcup_{\pat \in \Pat} \Ball_t (\pat).
\end{equation*}
In other words, a pattern $\pat$ covers an error $\ve$ if $\ve \in \Ball_t(\pat)$ for a \ac{BDD} with error correction capability $t$.
The probability of the covered space is defined as 
\begin{equation*}
\Pcov(\vy) = \sum_{\vv\in\Vcov} P(\vv\mid \vy)
= P( \vc\in \List \mid \vy),
\end{equation*}
which coincides with the probability that the transmitted codeword $\vc$ is in the list $\List$ found by the Chase-like algorithm, since $\List = \Vcov \cap \Cod$.
An efficient way for evaluating $\Pcov (\vy)$ for the Chase-II algorithm using Hamming balls is given in \cite{janz2025}.

\section{Performance Evaluation Methods}\label{sec:methods}

In the following, we consider three methods for evaluating the \ac{LER} of a given test-pattern set and discuss two of them in detail. The third method directly estimates the \ac{LER} and \ac{BLER} by Monte Carlo simulation and applies to arbitrary test-pattern sets $\Pat$.

To quantify the contribution of an additional set of test patterns $\Patprime$ to an existing set $\Pat$, we define

\begin{equation*}
    \Padd_{\Pat} (\Patprime) \triangleq P_{\Pat \cup \Patprime} (\vc \in \List) - P_{\Pat} (\vc \in \List).
\end{equation*}
With slight abuse of notation, the probability added by a single test pattern $\pat$ is denoted as $\Padd_{\Pat} (\pat) = \Padd_{\Pat} (\{\pat\})$.

\subsection{List Error Rate with Order Statistics}\label{subsec:LER_OS}
Rather than computing $\Pcl(\vc\notin\List)$ directly as in \cite{fossorier2002}, we consider its complement $\Pcl(\vc\in\List)$. The computation is based on channel order statistics; here, we give an abstract formulation of the required probabilities, while their evaluation is detailed in Appendices~\ref{app:os}--\ref{app:mc}. 

The results follow by considering the error vectors covered in addition to those corrected by the \ac{BDD}.
First, we consider Chase-II decoding with test-pattern set $\Patchase$. It can be partitioned by Hamming weight as ${\Patchase = \bigcup_{w=0}^p \Pat_w}$, where
${\Pat_w = \left\{\one_{\calK} \;\middle\vert\; \calK\in \pws_{[p]},\, |\calK|=w\right\}}$.
We further define ${\Pat_{[0,w]} \triangleq \bigcup_{i\in[0,w]}\Pat_i}$, which corresponds to the test-pattern set of the restricted Chase algorithm with maximum pattern weight $w$.

For $\vrho\in\Pat_w$, the additionally covered error vectors have weight $\wham(\ve)=w+t$ and contain at most $t$ errors among the $n-p$ most reliable positions.
Hence,
\begin{equation*}
\Padd_{\Pat_{[0,w-1]}}(\Pat_w)
= P(\wham(\ve_{[p]}) \mathrel{\geq} w\mid \calE_{t+w})P(\calE_{t+w}).
\end{equation*}
Summing these contributions gives
\begin{equation}\label{eq:os_ler_chase}
P\chase(\vc\in\List)
= \Pbdd + \sum_{i=t+1}^{p+t}
P(\wham(\ve_{[p]}) \mathrel{\geq} i-t \mid \calE_i)P(\calE_i),
\end{equation}
in accordance with \cite{fossorier2002}.
For the restricted Chase algorithm with maximum pattern weight $\wmax$, this reduces to
\begin{equation}\label{eq:os_ler_res_chase}
P\reschase(\vc\in\List)
= \Pbdd + \sum_{i=t+1}^{\wmax+t}
P(\wham(\ve_{[p]}) \mathrel{\geq} i-t \mid \calE_i)P(\calE_i).
\end{equation}
Next, we derive the corresponding \ac{LER} expression for Chase-like decoding with logistic-weight test patterns $\Patlw$.

\begin{definition}[Logistic-Weight Patterns]\label{def:lw}
    We write $\patprime\prec\pat$ if $\patprime$ has smaller logistic weight than $\pat$, with ties broken first by Hamming weight and then by binary representation.
    The logistic-weight pattern set $\Patlw$ of cardinality $\npat$ consists of the $\npat$ smallest patterns with respect to $\prec$.
    For any $\pat\in\Patlw$, $\Patlwprec \triangleq \{\patprime\in\Patlw \mid \patprime\prec\pat\}$ denotes its preceding patterns.
\end{definition}
\begin{property}\label{property:subpatterns}
    If $\pat\in\Patlw$ and $\patprime\prec\pat$, then $\patprime\in\Patlw$.
\end{property}

\begin{remark}
    From \cref{property:subpatterns}, every pattern $\patprime$ whose support $\Ind_{\patprime}$ is a proper subset of that of $\pat\in\Patlw$ belongs to $\Patlwprec$, i.e.,
    \[
        \patprime\in\left\{\one_{\calK}\mid
        \calK\in\pws_{\Ind_{\pat}}\setminus\{\Ind_{\pat}\}\right\}
        \subseteq\Patlwprec.
    \]
    This enables an efficient sequential implementation of Chase-like decoding.
    In particular, a preceding pattern $\patprime\in\Patlwprec$ differing from $\pat$ in only one position $i$ has already been decoded.
    Its syndrome can therefore be reused and updated by the syndrome contribution of position $i$ only, avoiding a full syndrome computation.
\end{remark}

We next identify the smallest pattern with respect to $\prec$ that covers a given error vector $\ve$. This result is then used to derive the \ac{LER} for logistic-weight patterns.

\begin{lemma}[Minimal covering pattern]\label{lemma:min_lw}
    Let $\ve\in\Ftwo^n$ be an error vector with $\wham(\ve)=w$. The smallest pattern in $\Ball_t(\ve)$ with respect to $\prec$ is obtained by setting the $\min(t,w)$ highest-index nonzero positions of $\ve$ to zero. We call this the minimal covering
    pattern of $\ve$.
\end{lemma}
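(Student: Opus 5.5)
Let $\calI_{\ve}=\{i_1<i_2<\dots<i_w\}$ and $m\triangleq\min(t,w)$. The candidate is $\pat^\ast=\one_{\{i_1,\dots,i_{w-m}\}}$. I will first check that $\pat^\ast\in\Ball_t(\ve)$, which is immediate since $\dham(\pat^\ast,\ve)=m\le t$. The main task is then to show that $\pat^\ast\preceq\pat$ for every $\pat\in\Ball_t(\ve)$. Recall that the order compares logistic weight first, then Hamming weight, then binary representation.

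\textbf{Step 1: $\wlw(\pat^\ast)\le\wlw(\pat)$ for every covering pattern $\pat$.} Write $\pat=\ve\oplus\vu$ with $\wham(\vu)\le t$. Split $\vu$ into the removed positions $A=\calI_{\vu}\cap\calI_{\ve}$ and the added positions $B=\calI_{\vu}\setminus\calI_{\ve}$. Then
\[
\wlw(\pat)=\wlw(\ve)-\sum_{i\in A}i+\sum_{i\in B}i\;\ge\;\wlw(\ve)-\sum_{i\in A}i .
\]
Since $|A|\le\min(t,w)=m$, the sum $\sum_{i\in A}i$ is at most the sum of the $m$ largest elements of $\calI_{\ve}$. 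Hence $\wlw(\pat)\ge\wlw(\ve)-\sum_{j=w-m+1}^{w}i_j=\wlw(\pat^\ast)$.

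\textbf{Step 2: the tie-breaking cases.} Equality in Step 1 forces $B=\emptyset$, because all indices are at least $1$. It also forces $\sum_{i\in A}i$ to equal the maximal sum of $m$ elements. Because the indices are distinct positive integers, the only way to reach that maximum is $A=\{i_{w-m+1},\dots,i_w\}$. This holds even allowing $|A|<m$, since dropping any element strictly decreases the sum. Therefore equality of logistic weight already gives $\pat=\pat^\ast$, and the Hamming-weight and binary-representation tie-breaks are never invoked. This uniqueness argument is the only delicate point. I would state it carefully as: a set of at most $m$ distinct positive integers drawn from $\calI_{\ve}$ attains the maximal sum only when it consists of exactly the top $m$ elements.

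\textbf{Step 3: edge case.} When $w\le t$, we have $m=w$, so $\pat^\ast=\zero$. This pattern has logistic weight $0$ and is trivially the minimum. Combining the three steps, $\pat^\ast$ is the unique $\prec$-minimal element of $\Ball_t(\ve)$, which proves the lemma.
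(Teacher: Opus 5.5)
Your proof is correct and takes the same route as the paper, which only asserts that the lemma follows directly from the definition of the logistic weight; your Steps 1--2 spell that claim out. Your observation that the logistic-weight minimizer over $\Ball_t(\ve)$ is unique, so the Hamming-weight and binary-representation tie-breaks of $\prec$ are never invoked, is a useful detail the paper leaves implicit.
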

\begin{IEEEproof}
    Follows directly from the definition of the logistic-weight.    
\end{IEEEproof}

\begin{lemma}\label{lemma:add_vec}
Let $\Patlw$ be the logistic-weight pattern set from \cref{def:lw} and let $\pat\in\Patlw\setminus\{\zero\}$.
The error vectors covered by $\pat$ but by none of the preceding patterns $\Patlwprec$  are
\begin{equation*}
    \left\{
        \pat\oplus\one_{\calK}
        \;\middle|\;
        \calK\in\pws_{[\imax_{\pat}+1,n]},\,
        |\calK|=t
    \right\}.
\end{equation*}
For $\pat=\zero$, the additionally covered error vectors are $\Ball_t(\zero)$.
\end{lemma}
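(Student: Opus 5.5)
We reduce the statement to \cref{lemma:min_lw} by an equivalence: an error vector $\ve$ is covered by $\pat$ but by no pattern in $\Patlwprec$ if and only if $\pat$ is the minimal covering pattern of $\ve$. The forward direction is immediate. $\pat$ covers $\ve$, so $\pat\in\Ball_t(\ve)$. If the minimal covering pattern $\patprime$ of $\ve$ satisfied $\patprime\prec\pat$, then \cref{property:subpatterns} would give $\patprime\in\Patlw$, hence $\patprime\in\Patlwprec$, and $\patprime$ would already cover $\ve$. This is a contradiction, so $\pat$ is the minimal covering pattern. Conversely, if $\pat$ is the minimal covering pattern of $\ve$, then every other $\patprime\in\Ball_t(\ve)$ satisfies $\pat\prec\patprime$. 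Hence no element of $\Patlwprec$ lies in $\Ball_t(\ve)$, i.e., none of them covers $\ve$. The lemma therefore amounts to characterizing the set of $\ve$ whose minimal covering pattern equals $\pat$.

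By \cref{lemma:min_lw}, the minimal covering pattern of $\ve$ with $\wham(\ve)=w$ is obtained by zeroing the $\min(t,w)$ highest-index ones of $\ve$. We split by $\wham(\ve)$.

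\emph{Case $w\le t$.} All ones are erased, so the minimal covering pattern is $\zero$. This case yields exactly $\Ball_t(\zero)$ and gives the statement for $\pat=\zero$.

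\emph{Case $w>t$.} Exactly $t$ ones are erased, and the result $\pat$ is nonzero with $\wham(\pat)=w-t$. Hence $\ve=\pat\oplus\one_{\calK}$, where $\calK$ is the set of the $t$ erased positions. Because these are the \emph{highest-index} ones of $\ve$, every element of $\calK$ exceeds every element of $\Ind_{\pat}$, so $\calK\subseteq[\imax_{\pat}+1,n]$ and $|\calK|=t$. Conversely, for any $\calK\subseteq[\imax_{\pat}+1,n]$ with $|\calK|=t$, the vector $\ve=\pat\oplus\one_{\calK}$ has weight $\wham(\pat)+t>t$. Its $t$ highest-index ones are exactly $\calK$, so its minimal covering pattern is $\pat$. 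This gives the claimed set for $\pat\neq\zero$.

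The main obstacle is the forward direction of the equivalence, because it needs \cref{property:subpatterns} to ensure that the minimal covering pattern actually lies in $\Patlw$ whenever it precedes $\pat$. A second point to check is that the patterns are compared via \cref{lemma:min_lw} as a genuine minimum, which relies on $\prec$ being a total order thanks to the tie-breaking in \cref{def:lw}. One should also confirm that \cref{lemma:min_lw} correctly selects the $t$ \emph{largest} indices; this holds because zeroing higher indices reduces the logistic weight most while leaving the same Hamming weight. With these checks in place, the remaining steps are bookkeeping.
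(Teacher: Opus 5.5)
Your proof is correct and takes the same route as the paper, whose one-line proof observes that $\pat$ newly covers exactly those error vectors for which it is the minimal covering pattern of \cref{lemma:min_lw}, and then reads off the set. You also supply the details the paper leaves implicit: the use of \cref{property:subpatterns} to justify that equivalence, and the case split on $\wham(\ve)$ relative to $t$.
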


\begin{IEEEproof}
By \cref{lemma:min_lw}, the pattern $\pat$ newly covers exactly those error vectors for which it is the minimal covering pattern, which are precisely the vectors stated above.
\end{IEEEproof}

The probability that the transmitted codeword is contained in the list generated by Chase-like decoding with logistic-weight patterns is given by the following proposition.

    \begin{proposition}\label{prop:ler_lw_patterns}
    Let $\Patlw$ be the logistic-weight pattern set from \cref{def:lw}, with $\wmax=\max_{\pat\in\Patlw}\wham(\pat)$. Then, the probability that the transmitted codeword is in the list is
    \begin{equation}\label{eq:os_ler_lw}
    \begin{aligned}
        &P\lw(\vc\in\List) = \\
        &\Pbdd+ \sum_{i = 1}^{\wmax} \sum_{\pat\in\Pat_{i}}P(\ve_{[\imax_{\pat}]}=\pat_{[\imax_{\pat}]}\mid \calE_{t+i}) P(\calE_{t+i}),
    \end{aligned}
    \end{equation}
    with $\Pat_{w} = \{\pat\in\Patlw\;\vert\;\wham(\pat)=w\}$.
\end{proposition}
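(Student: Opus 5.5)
We partition the event $\{\vc\in\List\}$ according to which pattern is the \emph{first} to cover the error. Since $\List=\Vcov\cap\Cod$, we have $\vc\in\List$ if and only if $\ve\in\Vcov=\bigcup_{\pat\in\Patlw}\Ball_t(\pat)$. Ordering $\Patlw$ by $\prec$, every covered $\ve$ has a unique $\prec$-smallest pattern $\pat\in\Patlw$ with $\ve\in\Ball_t(\pat)$. By \cref{lemma:min_lw}, this pattern is the minimal covering pattern of $\ve$, and \cref{property:subpatterns} guarantees it lies in $\Patlw$ whenever some pattern of $\Patlw$ covers $\ve$. The covered space is therefore a disjoint union, over $\pat\in\Patlw$, of the sets of error vectors newly covered by $\pat$. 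This gives
\[
P\lw(\vc\in\List)=\sum_{\pat\in\Patlw} P\bigl(\ve\text{ newly covered by }\pat\bigr).
\]

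For $\pat=\zero$, \cref{lemma:add_vec} gives the newly covered set $\Ball_t(\zero)$, whose probability is $\sum_{i=0}^t P(\calE_i)=\Pbdd$ by \eqref{eq:p_bdd}. For $\pat\neq\zero$ with $\wham(\pat)=i\ge1$, \cref{lemma:add_vec} says the newly covered vectors are $\pat\oplus\one_{\calK}$ with $\calK\subseteq[\imax_{\pat}+1,n]$ and $|\calK|=t$. These vectors have Hamming weight exactly $t+i$, since $\calK$ is disjoint from $\Ind_{\pat}$. We will show the set is characterized by two conditions:
\begin{itemize}
\item $\ve_{[\imax_{\pat}]}=\pat_{[\imax_{\pat}]}$;
\item $\wham(\ve)=t+i$.
\end{itemize}
Indeed, if both hold, then $\ve$ agrees with $\pat$ on $[\imax_{\pat}]$ and has exactly $t$ further ones, all beyond $\imax_{\pat}$. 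Hence
\[
P(\ve\text{ newly covered by }\pat)=P(\ve_{[\imax_{\pat}]}=\pat_{[\imax_{\pat}]}\mid\calE_{t+i})\,P(\calE_{t+i}).
\]
Grouping the nonzero patterns by Hamming weight $i\in[\wmax]$ into the classes $\Pat_i$ then yields \eqref{eq:os_ler_lw}.

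The step needing the most care is the disjointness and exhaustiveness underlying \cref{lemma:add_vec}. One must check that an error $\ve$ covered by some pattern of $\Patlw$ is counted exactly once, namely at its minimal covering pattern. This needs the following. If any $\pat'\in\Patlw$ covers $\ve$, then the minimal covering pattern $\pat$ satisfies $\pat\preceq\pat'$, so $\pat\in\Patlw$ by \cref{property:subpatterns}. Moreover, no pattern preceding $\pat$ covers $\ve$, by minimality.

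Conversely, we must verify that an error of weight $w\le t$ is attributed only to $\zero$, and an error of weight $w>t$ to a pattern of weight exactly $w-t$. Both follow because the minimal covering pattern removes $\min(t,w)$ ones. This is also why the weight index in $\calE_{t+i}$ matches the pattern weight $i$, and why the sum stops at $\wmax$.

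Finally, since the channel is memoryless and the reliabilities are sorted, the conditional probabilities $P(\ve_{[\imax_{\pat}]}=\pat_{[\imax_{\pat}]}\mid\calE_{t+i})$ are well-defined order-statistics quantities. Their evaluation is deferred to the appendices, as in the paper.
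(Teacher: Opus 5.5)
Your proposal is correct and follows the paper's proof: the zero pattern contributes $\Pbdd$, and by \cref{lemma:add_vec} each $\pat\in\Pat_i$ newly covers exactly the errors of weight $t+i$ with $\ve_{[\imax_{\pat}]}=\pat_{[\imax_{\pat}]}$, so summing these disjoint contributions gives \eqref{eq:os_ler_lw}. Your extra argument for disjointness and exhaustiveness, via the minimal covering pattern and \cref{property:subpatterns}, only makes explicit what the paper leaves inside \cref{lemma:add_vec}.
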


\begin{IEEEproof}
    The zero pattern covers $\Ball_t(\zero)$, contributing $\Pbdd$.
    By \cref{lemma:add_vec}, each $\pat\in\Pat_i$ additionally covers exactly those error vectors of weight $t+i$ that satisfy
    $\ve_{[\imax_{\pat}]}=\pat_{[\imax_{\pat}]}$.
    Summing these disjoint contributions over all $\pat\in\Patlw\setminus\{\zero\}$ yields \eqref{eq:os_ler_lw}.
\end{IEEEproof}
The conditional probabilities in \eqref{eq:os_ler_lw} can be calculated using order statistics, as detailed in Appendices~\ref{app:cond} and \ref{app:mc}.

\subsection{List Error Rate by Covered Space Probability}
Another approach to estimate the \ac{LER} of Chase-like algorithms uses the covered-space probability $\Pcov(\vy)$. Averaging over the channel output yields
\begin{equation*}
    \Pcl(\vc\in\List)
    = \mathbb{E}_{\vy}\left[\Pcov(\vy)\right]
    \approx \frac{1}{N}\sum_{i=1}^{N}
    \sum_{\vv\in\Vcov_i}\Pna(\vv\mid\vy_i),
\end{equation*}
which is accurate for a sufficiently large number $N$ of different channel outputs $\vy_i$. This method applies to arbitrary test-pattern sets $\Pat$. However, if computing overlaps between covered regions is too costly, directly estimating $\Pcl(\vc\in\List)$ by Monte Carlo simulation may be more efficient.

\section{Pattern Designs for Chase-like Decoding}
For a given number of patterns $\npat$, the optimal test-pattern set minimizes the \ac{LER}, i.e.,
\begin{equation*}
    \Pat_\mathrm{opt}
    \triangleq
    \argmin_{\substack{\Pat\subseteq\Ftwo^n:|\Pat|=\npat}}
    P_\Pat(\vc\notin\List).
\end{equation*}
Finding $\Pat_\mathrm{opt}$ is a maximum $\npat$-coverage problem, i.e., maximizing a submodular set function under a cardinality constraint, and is NP-hard \cite{nemhauser1978}. We therefore consider two iterative design algorithms: A greedy reference algorithm and a novel heuristic for finding good test-pattern sets. Both iteratively select patterns from a candidate set.

\subsection{Generating Candidate Patterns}\label{subsec:MLP}
In \cite{solomon2020}, an algorithm was introduced to enumerate all vectors in $\Ftwo^n$ according to a metric for a given realization $\vy$. We use this algorithm to generate candidate patterns in decreasing order of their metric, averaged over multiple realizations, and denote the resulting ordering by $\MLP$ (ordered candidate patterns), where $\MLP(i)$ is the pattern with the $i$-th largest average metric.

We distinguish two metrics: $\MLP_{\Ball_t(\vv)}$ orders patterns by $P(\ve\in\Ball_t(\pat))$, whereas $\MLP_{\vv}$ orders them by $P(\ve=\pat)$.

\subsection{Greedy Pattern Design}\label{subsec:greedy}
For the maximum $\npat$-coverage problem, the greedy algorithm that repeatedly selects the pattern with the largest marginal gain achieves at least a fraction $1-\nicefrac{1}{e}$ of the optimal covered probability \cite{nemhauser1978}, and in general this approximation ratio cannot be improved by any polynomial-time algorithm \cite{feige1998}. Thus, we use the greedy algorithm as a reference design.
Starting with $\Pat=\varnothing$, patterns are added until $|\Pat|=\npat$. At each step, the pattern with the largest reduction in \ac{LER} is selected as 
\begin{equation*}
\pat
= \argmin_{\vv\in\Ftwo^n} P_{\Pat\cup{\vv}}(\vc\notin\List)
= \argmax_{\vv\in\Ftwo^n} \Padd_{\Pat}(\vv).
\end{equation*}
The search can be accelerated using $\MLP_{\Ball_t(\vv)}$. After evaluating the first $j$ candidates, let $\pat_\mathrm{best}$ denote the best pattern found. Since
\begin{equation*}
    \Padd_{\Pat}(\pat_j)
    \leq P(\ve\in\Ball_t(\pat_j)),
    \qquad
    \pat_j=\MLP_{\Ball_t(\vv)}(j),
\end{equation*}
further candidates can be discarded once
\begin{equation*}
    P(\ve\in\Ball_t(\pat_j))
    \leq \Padd_{\Pat}(\pat_\mathrm{best}),
\end{equation*}
because $\MLP_{\Ball_t(\vv)}$ orders patterns by non-increasing covered probability.

The greedy guarantee should be interpreted with care. First, it bounds the covered probability rather than the \ac{LER} and is therefore loose at low \acp{LER}; for an optimal \ac{LER} of $10^{-4}$, it only guarantees an \ac{LER} below approximately $0.37$. Second, polynomial complexity in the number of candidates does not imply tractability here, since the candidate space has size $2^n$. We therefore propose a heuristic design with lower complexity. Moreover, the approximation guarantee does not preclude other heuristics, e.g., based on removing or exchanging patterns, from outperforming the greedy design on individual instances. We thus use the greedy design as a strong reference rather than an approximation of the optimum.

\subsection{Maximum Covering Pattern Design}\label{subsec:MCP}
A good test-pattern set should cover the most likely error vectors, i.e.,
${\min_{\pat\in\Pat}\dham(\pat,\MLP_{\vv}(i))\leq t}$ for small $i$.
Moreover, $P(\ve=\pat)$ can serve as a proxy for the marginal gain $\Padd_{\Pat}(\pat)$.

Based on these observations, we propose the \acf{MCOC} design in Algorithm~\ref{alg:os_cov}.
If an error vector $\ve$ is not yet covered, the most likely pattern covering $\ve$,
\[
    \pat=\MLP_{\vv}(j),
    \qquad
    j=\min\{i\mid \ve\in\Ball_t(\MLP_{\vv}(i))\},
\]
is added to $\Pat$.
Similar to logistic-weight patterns, \ac{MCOC} supports efficient sequential decoding.

\begin{remark}
Each pattern added by \ac{MCOC} is within Hamming distance $t$ of a previously added pattern since it is already covered. Hence, its syndrome can be obtained from an already computed syndrome using at most $t$ exclusive-or operations. For logistic-weight patterns, only one exclusive-or operation is required.
\end{remark}

\begin{algorithm}[htb]
    \caption{Maximally Covering Ordered Candidate Patterns.} 
    \label{alg:os_cov}
    \Input{\Ac{SNR} $E_\mathrm{b}/N_0$, pattern number $q$, parameters $(n,k,t)$.}
    \Output{Set of patterns $\Pat$.}
    \algrule
    $\Pat \gets \varnothing\text{;} \;i\gets 1$\;
    \While{$|\Pat|<q$}{
        $\pat \gets \MLP_{\vv}(i)$\tcp*{next likely pattern}
        \If{$\min_{\patprime \in \Pat} \dham (\pat, \patprime) > t$}{
            \For{$j \gets 1$ \KwTo $i$}{
                $\pat_j = \MLP_{\vv}(j)$\;
                \If{$\dham(\pat,\pat_j)\leq t$}{
                    $\Pat\gets \Pat\cup\{\pat_j\}$\;
                    \textbf{break}\tcp*{go to line 13}
                }
            }
        }    
        $i\gets i+1$\;
    }
	\Return $\Pat$
\end{algorithm}

\section{Results}

In \cref{fig:Ham_128}, direct Monte Carlo simulation, order-statistics calculations, and the covered-space method yield the same \ac{LER} for the $(128,120)$ extended Hamming code with $128$ logistic-weight patterns. This agreement, also verified for high-rate \ac{BCH} codes with $1\leq t\leq3$, validates all three methods for \ac{LER} evaluation.

Additionally, \cref{fig:Ham_128} compares four test-pattern sets with $\npat=16$ and $64$. For $\npat=16$ ($p=4$), Chase-II performs slightly worse than logistic-weight, greedy, and \ac{MCOC} patterns, which perform similarly. For $\npat=64$, Chase-II with $p=6$ exhibits a substantially higher \ac{LER}. The greedy design slightly outperforms \ac{MCOC} at high \ac{SNR}, while logistic-weight patterns are only $0.1\,\mathrm{dB}$ worse than the greedy design at an \ac{LER} of $10^{-4}$. Since the \ac{LER} of all considered designs lies below the \ac{ML} \ac{BLER}, they achieve near-\ac{ML} performance according to \eqref{eq:bler_bound}. The \ac{MCOC} test-pattern set was designed at an \ac{SNR} of $6.0\,\mathrm{dB}$.

We further consider the $(256,239)$ extended \ac{BCH} code used as a component code in oFEC \cite{ofec2023}. Its \ac{BLER} for different test-pattern sets is shown in \cref{fig:bch_256_bler}. The \ac{MCOC} design with only $\npat=42$ outperforms restricted Chase with $\npat=93$, while $93$ logistic-weight patterns achieve nearly the same performance as \ac{MCOC}. Since restricted Chase with $\npat=93$, $p=8$, and $\wmax=3$ meets the target performance for oFEC \cite{wang2023}, these results indicate that \ac{MCOC} and logistic-weight patterns can substantially reduce the number of required \ac{BDD} attempts. A greedy design could not be obtained within reasonable computation time. The \ac{MCOC} patterns were again designed at an \ac{SNR} of $6.0\,\mathrm{dB}$.

The covering-code-based construction of \cite{tokushige2003}, using a $(7,4)$ Hamming code as an initialization to design $\npat=42$ and $93$ patterns, performs worse than both logistic-weight and \ac{MCOC} patterns and is therefore omitted from \cref{fig:bch_256_bler}. This suggests that the covering-code heuristic may be less effective for high-rate codes.

Finally, \cref{fig:bch_256_ler_snr} shows the \ac{LER} of the same \ac{BCH} code at a fixed \ac{SNR} of $5.5\,\mathrm{dB}$ as a function of the number of test patterns. As the number of patterns increases, the performance advantage of \ac{MCOC} and logistic-weight patterns over the Chase variants becomes more pronounced.

\begin{figure}[t]
    \centering
    \begin{tikzpicture}[spy using outlines={circle}]
\begin{writereplace}
\begin{axis}[%
    xshift=1.5cm,
    xmin=3,
    xmax=7.5,
    ymode=log,
    ymin=1e-5,
    ymax=1,
    xmajorgrids,
    xminorgrids=true,
    ymajorgrids,
    yminorticks=true,
    yminorgrids=true,
    major grid style={hellgrau},
    minor grid style={densely dotted, hellgrau},
    width=0.985\linewidth,
    height=\linewidth,
    tick align=outside,
    tick pos=left,
    xlabel={$E_\mathrm{b}/N_0$ in $\mathrm{dB}$},
    ylabel={LER},
    label style={font=\small},
    ticklabel style = {font=\small},
    every axis plot/.style={
    line width=1pt,
    mark size=2.25pt
    }
]

\addplot [\colorGreedy, \lineStyleA, mark=\markGreedy, mark options={solid}, legend image post style={solid}]
 table[col sep=comma]{%
2.50,5.55432E-01
3.00,5.55432E-01
3.50,3.46575E-01
4.00,1.74264E-01
4.50,6.72811E-02
5.00,1.95100E-02
5.50,4.15737E-03
6.00,5.91579E-04
6.50,6.5625E-05
7.00,4.51754E-06

};
\label{plot:ham128 greedy}

\addplot [\colorGreedy, \lineStyleB, mark=\markGreedy, mark options={solid}, forget plot]
 table[col sep=comma]{%
2.50,3.96156E-01
3.00,3.96156E-01
3.50,2.03781E-01
4.00,7.96716E-02
4.50,2.24847E-02
5.00,4.41316E-03
5.50,5.92632E-04
6.00,4.47368E-05
7.00,9.56938E-08

};

\addplot[\colorChase, \lineStyleA, mark=\markChase, mark options={solid}, legend image post style={solid}]
 table[col sep=comma]{%
2.50,7.51993E-01
3.00,5.66621E-01
3.50,3.60667E-01
4.00,1.86633E-01
4.50,7.56458E-02
5.00,2.30821E-02
5.50,5.36895E-03
6.00,8.68421E-04
6.50,1.00526E-04
7.00,8.27068E-06
7.50,4.27899E-07
}; 
\label{plot:ham128 chase}

\addplot [\colorChase, \lineStyleB, mark=\markChase, mark options={solid}, forget plot]
 table[col sep=comma]{%
2.50,6.37610E-01
3.00,4.27815E-01
3.50,2.33856E-01
4.00,1.00308E-01
4.50,3.27532E-02
5.00,7.95632E-03
5.50,1.38000E-03
6.00,1.71579E-04
6.50,1.82456E-05
7.00,1.07411E-06
7.50,3.80561E-08
};

\addplot [\colorLW, \lineStyleA, mark=\markLW, mark options={solid}, legend image post style={solid}]
  table[col sep=comma]{%
2.50,7.46439E-01
3.00,5.57486E-01
3.50,3.50701E-01
4.00,1.76282E-01
4.50,6.87432E-02
5.00,2.02158E-02
5.50,4.27737E-03
6.00,5.95789E-04
6.50,7.52632E-05
7.00,4.38596E-06
7.50,2.41429E-07
}; 
\label{plot:ham128 lw}

\addplot [\colorLW, \lineStyleB, mark=\markLW, mark options={solid}, forget plot]
  table[col sep=comma]{%
2.50,6.14878E-01
3.00,3.99691E-01
3.50,2.06868E-01
4.00,8.12053E-02
4.50,2.33584E-02
5.00,4.68263E-03
5.50,6.50000E-04
6.00,6.89474E-05
6.50,2.95322E-06
};

\addplot [\colorMCOC, \lineStyleA, mark=\markMCOC, mark options={solid}, legend image post style={solid}]
  table[col sep=comma]{

3.00,5.54889E-01
3.50,3.46663E-01
4.00,1.74321E-01
4.50,6.70684E-02
5.00,1.97316E-02
5.50,4.00526E-03
6.00,5.47368E-04
6.50,5.78947E-05
7.00,3.62976E-06

}; 
\label{plot:ham128 mcoc}

\addplot [\colorMCOC, \lineStyleB, mark=\markMCOC, mark options={solid}, forget plot]
  table[col sep=comma]{

3.00,3.94842E-01
3.50,2.04921E-01
4.00,7.97684E-02
4.50,2.23105E-02
5.00,4.76842E-03
5.50,6.36842E-04
6.00,5.90643E-05
6.50,2.64480E-06
7.00,9.39179E-08
};

\addplot [black, \lineStyleC, mark=\markML,  mark options={solid}] 
table[col sep=comma]{
2.50,7.82982E-01
3.00,6.05557E-01
3.50,3.97797E-01
4.00,2.14727E-01
4.50,9.33432E-02
5.00,3.15111E-02
5.50,8.42263E-03
6.00,1.75526E-03
6.50,2.91053E-04
7.00,3.31579E-05
7.50,3.31269E-06
};
\label{plot:ham128 ml bler}

\addplot [violettblau, loosely dotted, mark=o, mark options={solid}, forget plot]
  table[col sep=comma]{
2.50,5.45586E-01
3.00,3.27381E-01
3.50,1.52704E-01
4.00,5.30384E-02
4.50,1.31253E-02
5.00,2.18368E-03
5.50,2.56316E-04
6.00,1.92982E-05
6.50,8.92061E-07
}; 
\label{plot:ham128 lw 128 dmc}

 \addplot [violettblau!60, loosely dotted, dash phase=1.2pt, mark=+,  mark options={solid}, forget plot, legend image post style={dash phase=0pt}]
 table[col sep=comma]{%
2.50,5.45075E-01
3.00,3.27531E-01
3.50,1.53068E-01
4.00,5.29890E-02
4.50,1.30240E-02
5.00,2.18198E-03
5.50,2.41819E-04
6.00,1.72539E-05
6.50,7.90383E-07
7.00,2.28063E-08
};
\label{plot:ham128 lw 128 cs}

\addplot [violettblau!30, loosely dotted, mark=x, dash phase=2.4pt,  mark options={solid}, forget plot, legend image post style={dash phase=0pt}]
  table[col sep=comma]{%
2.50,5.45062E-01
3.00,3.27511E-01
3.50,1.53075E-01
4.00,5.30068E-02
4.50,1.30162E-02
5.00,2.17583E-03
5.50,2.43312E-04
6.00,1.71794E-05
6.50,8.72473E-07
}; 
\label{plot:ham128 lw 128 os}

\addplot [black, \lineStyleA] 
table[col sep=comma]{%
  2,2
  2,3
}; 
\label{plot:ham128 q=16}

\addplot [black, \lineStyleB] 
table[col sep=comma]{%
  2,2
  2,3
}; 
\label{plot:ham128 q=64}

\addplot [black, loosely dotted] 
table[col sep=comma]{%
  2,2
  2,3
}; 
\label{plot:ham128 q=128}

\coordinate (coordSpyA) at (axis cs:5.00,2.18198E-03);
\coordinate (coordMagA) at (axis cs:3.50,1E-03);

\coordinate (coordA) at (axis description cs:1,1);
\coordinate (coordB) at (axis description cs:0,0);

\spy[magnification=2.75,
    size=1cm,
    connect spies,
    every spy in node/.append style={fill=white},
    spy connection path={
        \fill[black]
            ($(tikzspyonnode.91)+(0,-0.1pt)$) --
            ($(tikzspyonnode.91)+(0,0.1pt)$) --
            ($(tikzspyinnode.91)+(0,0.4pt)$) --
            ($(tikzspyinnode.91)+(0,-0.4pt)$) --
            cycle;
        \fill[black]
            ($(tikzspyonnode.290)+(-0.3pt,0)$) --
            ($(tikzspyonnode.290)+(0.3pt,0)$) --
            ($(tikzspyinnode.290)+(0,-0.4pt)$) --
            ($(tikzspyinnode.290)+(0,0.4pt)$) --
            cycle;
    }
] on (coordSpyA) in node at (coordMagA);
\end{axis}

\node (tabA) [
    shape=rectangle,
    anchor=north east,
    font=\footnotesize,
    fill=white,
    fill opacity=0.7,
    text opacity=1,
] at (coordA) {
    \begin{NiceTabular}{@{}l@{}}
        Type\\
        \midrule
        \ref{plot:ham128 chase} Chase\\
        \ref{plot:ham128 lw} LW\\
        \ref{plot:ham128 mcoc} MCOC\\
        \ref{plot:ham128 greedy} Greedy\\[4pt]
        Count\\
        \midrule
        \ref{plot:ham128 q=16} $q=\phantom{0}16$\\
        \ref{plot:ham128 q=64} $q=\phantom{0}64$\\
        \ref{plot:ham128 q=128} $q=128$\\[4pt]
        BLER\\
        \midrule
        \ref{plot:ham128 ml bler} ML\\
    \end{NiceTabular}
};

\node (tabB) [
    shape=rectangle,
    anchor=south west,
    font=\footnotesize,
    fill=white,
    fill opacity=0.7,
    text opacity=1,
] at (coordB) {
    \begin{NiceTabular}{@{}l@{}}
        LER method:
        LW, $q=128$\\
        \midrule
        \ref{plot:ham128 lw 128 os} Order Statistics\\
        \ref{plot:ham128 lw 128 cs} Covered Space\\
        \ref{plot:ham128 lw 128 dmc} Monte Carlo\\
    \end{NiceTabular}
};
\end{writereplace}
\end{tikzpicture}  
    \caption{\Acl{LER} of the (128,120) extended Hamming code decoded with different sets of patterns of size $q\in\{16,64,128\}$. }
    \label{fig:Ham_128}
\end{figure}

\begin{figure}[t]
    \centering
    \begin{tikzpicture}[spy using outlines={circle}]
\begin{writereplace}
\begin{axis}[%
        xshift=1.5cm,
        xmin=4.0,
        xmax=6.75,
        ymode=log,
        ymin=1e-5,
        ymax=01,
        xmajorgrids,
		xminorgrids=true,
		ymajorgrids,
		yminorticks=true,
		yminorgrids=true,
		major grid style={hellgrau},
    	minor grid style={densely dotted, hellgrau},
        width=0.985\linewidth,
        height=\linewidth,
		tick align=outside,
    	tick pos=left,
        xlabel={$E_\mathrm{b}/N_0$ in $\mathrm{dB}$},
        ylabel={BLER},
        label style={font=\small},
		ticklabel style = {font=\small},
		every axis plot/.style={
        line width=1pt,
        mark size=2.25pt
    	}
]

\addplot [color=\colorRC, \lineStyleA,  mark=\markRC, mark options={solid}, legend image post style={solid}]
 table[col sep=comma]{%
2.50,9.53844E-01
3.00,8.40397E-01
3.50,6.17070E-01
4.00,3.39979E-01
4.50,1.29457E-01
5.00,3.20663E-02
5.50,4.88684E-03
6.00,4.47368E-04
6.50,2.47368E-05
7.00,9.66856E-07
}; 
\label{plot:bch256 rc 42}

\addplot[color=\colorRC, \lineStyleB, mark=\markRC, mark options={solid}]
table[col sep=comma]{%
2.50,9.43869E-01
3.00,8.12712E-01
3.50,5.71405E-01
4.00,2.95372E-01
4.50,1.03311E-01
5.00,2.35189E-02
5.50,3.37105E-03
6.00,3.20526E-04
6.50,2.08772E-05
7.00,9.75686E-07
}; 
\label{plot:bch256 rc 93}

\addplot [color=\colorLW, \lineStyleA,  mark=\markLW, mark options={solid}, legend image post style={solid}]
  table[col sep=comma]{%
2.50,9.53609E-01
3.00,8.39272E-01
3.50,6.14518E-01
4.00,3.37190E-01
4.50,1.27545E-01
5.00,3.11005E-02
5.50,4.64526E-03
6.00,4.10000E-04
6.50,2.29825E-05
7.00,8.75120E-07
}; 
\label{plot:bch256 lw 42}

\addplot [color=\colorLW, \lineStyleB, mark=\markLW, mark options={solid}]
 table[col sep=comma]{%
2.50,9.41482E-01
3.00,8.06558E-01
3.50,5.58917E-01
4.00,2.81128E-01
4.50,9.37137E-02
5.00,1.94658E-02
5.50,2.43474E-03
6.00,1.78947E-04
6.50,9.12281E-06
};
\label{plot:bch256 lw 93}

\addplot [color=\colorMCOC, \lineStyleA,  mark=\markMCOC, mark options={solid}, legend image post style={solid}]
  table[col sep=comma]{
3.00,8.36444E-01
3.50,6.08272E-01
4.00,3.29198E-01
4.50,1.21246E-01
5.00,2.85605E-02
5.50,4.04842E-03
6.00,3.34737E-04
6.50,1.51316E-05
7.00,5.01253E-07

}; 
\label{plot:bch256 mcoc 42}

\addplot[color=\colorMCOC, \lineStyleB,  mark=\markMCOC, mark options={solid}]
 table[col sep = comma]{%
3.00,8.02157E-01
3.50,5.51620E-01
4.00,2.73984E-01
4.50,8.93958E-02
5.00,1.78526E-02
5.50,2.10947E-03
6.00,1.48947E-04
6.50,6.71053E-06
7.00,2.49439E-07
};
\label{plot:bch256 mcoc 93}

\addplot [color=\colorMCOC, \lineStyleA,  mark=\markMCOC, mark options={solid}, legend image post style={solid}]
  table[col sep=comma]{
3.00,8.36319E-01
3.50,6.08326E-01
4.00,3.30219E-01
4.50,1.21249E-01
5.00,2.83600E-02
5.50,4.05263E-03
6.00,3.31579E-04
6.50,1.82456E-05
7.00,5.26316E-07
};

\addplot [black, \lineStyleC, mark=\markML,  mark options={solid}] 
table[col sep=comma]{
2.50,9.25363E-01
3.00,7.64652E-01
3.50,4.95329E-01
4.00,2.24238E-01
4.50,6.58821E-02
5.00,1.19216E-02
5.50,1.36316E-03
6.00,1.12105E-04
6.50,5.36842E-06

};
\label{plot:bch256 ml}

\coordinate (coordA) at (axis description cs:1,1);

\coordinate (coordSpyB) at (axis cs:6,2.25E-04);
\coordinate (coordMagB) at (axis cs:4.81,2E-04);
\end{axis}

\spy[
    magnification=2.75,
    size=3.375cm,
    connect spies,
    every spy in node/.append style={fill=white},
    spy connection path={
        \fill[black]
            ($(tikzspyonnode.69)+(-0.3pt,0)$) --
            ($(tikzspyonnode.69)+(0.3pt,0)$) --
            ($(tikzspyinnode.70)+(0,0.4pt)$) --
            ($(tikzspyinnode.70)+(0,-0.4pt)$) --
            cycle;
        \fill[black]
            ($(tikzspyonnode.290)+(-0.3pt,0)$) --
            ($(tikzspyonnode.290)+(0.3pt,0)$) --
            ($(tikzspyinnode.292)+(0,-0.4pt)$) --
            ($(tikzspyinnode.292)+(0,0.4pt)$) --
            cycle;
    }
] on (coordSpyB) in node at (coordMagB);

\node (tabA) [
    shape=rectangle,
    anchor=north east,
    font=\footnotesize,
    fill=white,
    fill opacity=0.7,
    text opacity=1,
] at (coordA) {
    \begin{NiceTabular}{@{}l@{}}
        Type\\
        \midrule
        \ref{plot:bch256 rc 42} Restr. Chase\\
        \ref{plot:bch256 lw 42} LW\\
        \ref{plot:bch256 mcoc 42} MCOC\\
        \ref{plot:ham128 ml bler} ML\\[4pt]
        Count\\
        \midrule
        \ref{plot:ham128 q=16} $q=42$\\
        \ref{plot:ham128 q=64} $q=93$\\[4pt]
    \end{NiceTabular}
};
\end{writereplace}
\end{tikzpicture}
    \caption{\Acl{BLER} of the (256,239) extended BCH code decoded with $q\in\{42,93\}$ patterns for different pattern sets. The restricted Chase pattern sets are with $p\in\{6,8\}$ and $\wmax = 3$.}
    \label{fig:bch_256_bler}
\end{figure}

\begin{figure}[hbt!]
    \centering
    \begin{tikzpicture}
\begin{axis}[%
    xshift=1.5cm,
    xmin=0,
    xmax=100,
    ymode=log,
    ymin=1e-3,
    ymax=0.1,
    xmajorgrids,
    xminorgrids=true,
    ymajorgrids,
    yminorticks=true,
    yminorgrids=true,
    major grid style={hellgrau},
    minor grid style={densely dotted, hellgrau},
    width=0.975*\columnwidth,
    height=0.975*\columnwidth,
    tick align=outside,
    tick pos=left,
    xlabel={number of patterns $\npat$},
    ylabel={LER},
    label style={font=\small},
    ticklabel style = {font=\small},
    every axis plot/.style={
        line width=1pt,
        mark size=2.25pt
    }
]

\addplot [\colorChase, only marks, mark=\markChase, mark options={solid}]
 table[col sep=comma]{%
4.0,0.0389753
8.0,0.0202126
16.0,0.0108874
32.0,0.00576263
64.0,0.00318632
128.0,0.00178421
}; 
\label{plot:npatterns chase}

\addplot [\colorRC, only marks, mark=o, mark options={solid}]
  table[col sep=comma]{%
7.0,0.0233926
11.0,0.0162242
16.0,0.0130432
22.0,0.0115532
29.0,0.0108453
37.0,0.0104695
}; 
\label{plot:npatterns rc w=2}

\addplot [\colorRC, only marks, mark=x, mark options={solid}]
  table[col sep=comma]{%
15.0,0.0112547
26.0,0.00673526
42.0,0.00442
64, 3.40842E-03
93.0,0.00277789
}; 
\label{plot:npatterns rc w=3}

\addplot [\colorRC, only marks, mark=+, mark options={solid}]
  table[col sep=comma]{%
57.0,0.00326263
99.0,0.00197421
}; 
\label{plot:npatterns rc w=4}

\addplot [\colorLW, only marks, mark=\markLW, mark options={solid}]
 table[col sep=comma]{%
2.0,0.0744537
4.0,0.0464858
8.0,0.0230168
11.0,0.0169284
16.0,0.01076
22.0,0.00818
26.0,0.00688
29.0,0.00626632
32.0,0.00529158
37.0,0.00492632
42.0,0.00416947
57.0,0.00293421
64.0,0.00257368
93.0,0.00181632
99.0,0.00164737
};
\label{plot:npatterns lw}

\addplot [\colorMCOC, only marks, mark=\markMCOC, mark options={solid}, legend image post style={solid}]
  table[col sep=comma]{
2.0,0.0742409
4.0,0.0388804
8.0,0.0204013
11.0,0.0144765
16.0,0.00986565
22.0,0.00709652
26.0,0.00595435
29.0,0.00533174
32.0,0.00472957
37.0,0.00409043
42.0,0.00360826
57.0,0.00252696
64.0,0.00221391
93.0,0.00147261
99.0,0.00136478
}; 
\label{plot:npatterns mcoc}

\addplot[mark=none, \lineStyleC, black, samples = 2,domain=0:100] {1.36316E-03};
\label{plot:npatterns ml bler}

\node[fill=white, fill opacity = 0.8, text opacity = 1]  at (axis cs: 50, 1.36e-3){\scriptsize ML BLER};

\coordinate (coordA) at (axis description cs:1,1);
\coordinate (coordB) at (axis description cs:1,1);
\end{axis}

\node (tabA) [
    shape=rectangle,
    anchor=north east,
    font=\footnotesize,
    fill=white,
    fill opacity=0.7,
    text opacity=1,
] at (coordA) {
    \begin{NiceTabular}{@{}l@{}}
        \parbox[b]{0.22cm}{\hfil\ref{plot:npatterns chase}\hfil} Chase\\
        \parbox[b]{0.22cm}{\hfil\ref{plot:npatterns lw}\hfil} LW\\
        \parbox[b]{0.22cm}{\hfil\ref{plot:npatterns mcoc}\hfil} MCOC\\[4pt]
        \underline{Restr. Chase}:\\
        \parbox[b]{0.22cm}{\hfil\ref{plot:npatterns rc w=2}\hfil} $\wmax = 2$\\
        \parbox[b]{0.22cm}{\hfil\ref{plot:npatterns rc w=3}\hfil} $\wmax = 3$\\
        \parbox[b]{0.22cm}{\hfil\ref{plot:npatterns rc w=4}\hfil} $\wmax = 4$\\
    \end{NiceTabular}
};

\end{tikzpicture}  
    \caption{\Acl{LER} over the number of patterns of different sets used to decode the (256,239) extended BCH code at an \ac{SNR} of $5.5\,\mathrm{dB}$.}
    \label{fig:bch_256_ler_snr}
\end{figure}

\section{Conclusion}
We presented three methods for evaluating test-pattern sets for Chase-like decoding, all providing accurate estimates of the \acl{LER}. In addition, we proposed a design algorithm that achieves a given performance target with substantially fewer test patterns than conventional Chase-based sets while performing similarly to the greedy design. Future work may extend these insights to the design of test-pattern sets for iterative decoding of product codes and their generalizations.

\section*{Acknowledgments}
The authors would like to thank Benjamin Castellaz for pointing us to the methods for efficient order-statistics calculations and the anonymous reviewers for their valuable feedback.

\newpage

\IEEEtriggeratref{15} 
\bibliographystyle{IEEEtran}
\bibliography{bibliofile}

\begin{thebibliography}{10}
\providecommand{\url}[1]{#1}
\csname url@samestyle\endcsname
\providecommand{\newblock}{\relax}
\providecommand{\bibinfo}[2]{#2}
\providecommand{\BIBentrySTDinterwordspacing}{\spaceskip=0pt\relax}
\providecommand{\BIBentryALTinterwordstretchfactor}{4}
\providecommand{\BIBentryALTinterwordspacing}{\spaceskip=\fontdimen2\font plus
\BIBentryALTinterwordstretchfactor\fontdimen3\font minus
  \fontdimen4\font\relax}
\providecommand{\BIBforeignlanguage}[2]{{%
\expandafter\ifx\csname l@#1\endcsname\relax
\typeout{** WARNING: IEEEtran.bst: No hyphenation pattern has been}%
\typeout{** loaded for the language `#1'. Using the pattern for}%
\typeout{** the default language instead.}%
\else
\language=\csname l@#1\endcsname
\fi
#2}}
\providecommand{\BIBdecl}{\relax}
\BIBdecl

\bibitem{hocquenghem1959}
A.~Hocquenghem, ``\BIBforeignlanguage{French}{Codes correcteurs d'erreurs},''
  \emph{\BIBforeignlanguage{French}{Chiffres}}, vol.~2, pp. 147--156, Sep.
  1959.

\bibitem{bose1960}
R.~C. Bose and D.~K. {Ray-Chaudhuri}, ``On a class of error correcting binary
  group codes,'' \emph{Information and Control}, vol.~3, no.~1, pp. 68--79,
  Mar. 1960.

\bibitem{lin2004}
S.~Lin and D.~J. Costello, \emph{Error {{Control Coding}}: {{Fundamentals}} and
  {{Applications}}}, 2nd~ed.\hskip 1em plus 0.5em minus 0.4em\relax
  Pearson-Prentice Hall, 2004.

\bibitem{forney1966a}
G.~D. {Forney Jr.}, \emph{Concatenated Codes}, ser. Research Monograph.\hskip
  1em plus 0.5em minus 0.4em\relax Cambridge, MA, USA: MIT Press, 1966, no.~37.

\bibitem{matuz2025}
B.~Matuz, E.~B. Yacoub, and S.~Calabr{\`o}, ``Serially concatenated codes for
  data center networks,'' in \emph{2025 13th International Symposium on Topics
  in Coding (ISTC)}, Aug. 2025, pp. 1--5.

\bibitem{smith2012a}
B.~P. Smith, A.~Farhood, A.~Hunt, F.~R. Kschischang, and J.~Lodge, ``Staircase
  {{Codes}}: {{FEC}} for 100 {{Gb}}/s {{OTN}},'' \emph{J. Light. Technol.},
  vol.~30, no.~1, pp. 110--117, Jan. 2012.

\bibitem{sukmadji2022}
A.~Y. Sukmadji, U.~{Mart{\'i}nez-Pe{\~n}as}, and F.~R. Kschischang, ``Zipper
  codes,'' \emph{J. Light. Technol.}, vol.~40, no.~19, pp. 6397--6407, Oct.
  2022.

\bibitem{shehadeh2025}
M.~Shehadeh, F.~R. Kschischang, A.~Y. Sukmadji, and W.~Kingsford,
  ``Higher-order staircase codes,'' \emph{IEEE Trans. Inf. Theory}, vol.~71,
  no.~4, pp. 2517--2538, Apr. 2025.

\bibitem{ofec2023}
M.~A. Sluyski, ``Open {{ROADM MSA}} {{6.0 W B400G}} port digital specification
  ({{400G-800G}}),'' Dec. 2023.

\bibitem{pyndiah1998}
R.~Pyndiah, ``Near-optimum decoding of product codes: block turbo codes,''
  \emph{IEEE Trans. Commun.}, vol.~46, pp. 1003--1010, Aug. 1998.

\bibitem{chase1972}
D.~Chase, ``Class of algorithms for decoding block codes with channel
  measurement information,'' \emph{IEEE Trans. Inf. Theory}, vol.~18, no.~1,
  pp. 170--182, Jan. 1972.

\bibitem{duffy2022}
K.~R. Duffy, W.~An, and M.~M{\'e}dard, ``Ordered reliability bits guessing
  random additive noise decoding,'' \emph{IEEE Trans. Signal Process.},
  vol.~70, pp. 4528--4542, 2022.

\bibitem{shen2025}
Y.~Shen, W.~Song, L.~D. Blanc, Y.~Ren, A.~{Balatsoukas-Stimming}, A.~Alvarado,
  and A.~Burg, ``Iterative logistic weight based chase decoder for open forward
  error correction,'' in \emph{2025 Optical Fiber Communications Conference and
  Exhibition (OFC)}, Mar. 2025, pp. 1--3.

\bibitem{tokushige2003}
H.~Tokushige, T.~Koumoto, M.~P.~C. Fossorier, and T.~Kasami, ``Selection method
  of test patterns in soft-decision iterative bounded distance decoding
  algorithms,'' \emph{IEICE Trans. Fundam.}, vol. E86-A, no.~10, pp.
  2445--2451, Oct. 2003.

\bibitem{tokushige2010}
H.~Tokushige, M.~P. C.~Fossorier, and T.~Kasami, ``A test pattern selection
  method for a joint bounded-distance and encoding-based decoding algorithm of
  binary codes,'' \emph{IEEE Trans. Commun.}, vol.~58, no.~6, pp. 1601--1604,
  Jun. 2010.

\bibitem{cohen1997}
G.~Cohen, I.~Honkala, S.~Litsyn, and A.~Lobstein, \emph{Covering Codes}, ser.
  North-Holland Mathematical Library.\hskip 1em plus 0.5em minus 0.4em\relax
  Amsterdam, The Netherlands: Elsevier, 1997, vol.~54.

\bibitem{cardinale2026}
A.~Cardinale, W.~Song, B.~Chen, A.~Alvarado, A.~Burg, and Y.~Shen, ``Maximum
  coverage {{Chase}} decoder for optical interconnects,'' 2026.

\bibitem{forney1966}
G.~D. {Forney Jr.}, ``Generalized minimum distance decoding,'' \emph{IEEE
  Trans. Inf. Theory}, vol.~12, no.~2, pp. 125--131, Apr. 1966.

\bibitem{agrawal2000}
D.~Agrawal and A.~Vardy, ``Generalized minimum distance decoding in
  {{Euclidean}} space: performance analysis,'' \emph{IEEE Trans. Inf. Theory},
  vol.~46, no.~1, pp. 60--83, Jan. 2000.

\bibitem{fossorier2002}
M.~P.~C. Fossorier and S.~Lin, ``Error performance analysis for
  reliability-based decoding algorithms,'' \emph{IEEE Trans. Inf. Theory},
  vol.~48, no.~1, pp. 287--293, Jan. 2002.

\bibitem{he2025}
X.~He, L.~Chen, and Y.~Wu, ``Performance analysis and enhanced chase decoding
  of {{GII-BCH}} codes,'' \emph{IEEE Trans. Commun.}, vol.~73, no.~10, pp.
  8647--8658, Oct. 2025.

\bibitem{janz2025}
T.~Janz, S.~Oberm{\"u}ller, A.~Zunker, and S.~{ten Brink}, ``Soft-output from
  covered space decoding of product codes,'' in \emph{2025 13th International
  Symposium on Topics in Coding (ISTC)}, Aug. 2025, pp. 1--5.

\bibitem{nemhauser1978}
G.~L. Nemhauser, L.~A. Wolsey, and M.~L. Fisher, ``An analysis of
  approximations for maximizing submodular set functions---{{I}},''
  \emph{Mathematical Programming}, vol.~14, no.~1, pp. 265--294, Dec. 1978.

\bibitem{solomon2020}
A.~Solomon, K.~R. Duffy, and M.~M{\'e}dard, ``Soft maximum likelihood decoding
  using {{GRAND}},'' in \emph{2020 IEEE International Conference on
  Communications (ICC)}, Jun. 2020, pp. 1--6.

\bibitem{feige1998}
U.~Feige, ``A threshold of ln n for approximating set cover,'' \emph{J. ACM},
  vol.~45, no.~4, p. 634–652, Jul. 1998.

\bibitem{wang2023}
W.~Wang, Z.~Long, W.~Qian, K.~Tao, Z.~Wei, S.~Zhang, Z.~Feng, Y.~Xia, and
  Y.~Chen, ``Real-time {{FPGA}} investigation of potential {{FEC}} schemes for
  {{800G-ZR}}/{{ZR}}+ forward error correction,'' \emph{J. Light. Technol.},
  vol.~41, no.~3, pp. 926--933, Feb. 2023.

\bibitem{david2003}
H.~A. David and H.~N. Nagaraja, \emph{Order Statistics}, 3rd~ed.\hskip 1em plus
  0.5em minus 0.4em\relax Hoboken, NJ, USA: Wiley, 2003.

\bibitem{robert2004}
C.~P. Robert and G.~Casella, \emph{Monte Carlo Statistical Methods},
  2nd~ed.\hskip 1em plus 0.5em minus 0.4em\relax Springer, 2004.

\bibitem{devroye1986}
L.~Devroye, \emph{Non-Uniform Random Variate Generation}.\hskip 1em plus 0.5em
  minus 0.4em\relax New York, USA: Springer-Verlag, 1986.

\end{thebibliography}

\clearpage
\appendices
\crefname{section}{Appendix}{Appendices}
\Crefname{section}{Appendix}{Appendices}

\section{Order-Statistics Setup}\label{app:os}
This appendix establishes the order-statistics results that support \cref{subsec:LER_OS}.
\cref{app:os} specifies the statistical model of the sorted reliabilities, \cref{app:cond} turns the two conditional events appearing in \eqref{eq:os_ler_chase}, \eqref{eq:os_ler_res_chase} and \eqref{eq:os_ler_lw} into integrals over these order statistics, and \cref{app:mc} describes how these integrals can be evaluated numerically.

Throughout the appendix, we use $|y_i|$ as the reliability of position $i$.\footnote{The reliability defined in \cref{subsec:channel} is $\alpha_i = |\ell_i| = \frac{2}{\sigma^2}|y_i|$. Since only order relations between reliabilities enter the following expressions and the two differ by the positive constant $\frac{2}{\sigma^2}$, the resulting probabilities are identical.}
The density functions $\fcor(x)$ and $\ferr(x)$ of the reliability at a position $i$ for a correct hard decision or a wrong hard decision, respectively, are given by
\begin{equation*}
\begin{aligned}
\fcor(x) &= \begin{cases}
  \frac{\qsigma(x-1)}{1-\Qsigma(1)} & x\geq 0 \\
  0 & \text{otherwise}
\end{cases} \\
\ferr(x) &= \begin{cases}
  \frac{\qsigma(x+1)}{\Qsigma(1)} & x\geq 0 \\
  0 & \text{otherwise}
\end{cases}
\end{aligned}
\end{equation*}
with
\begin{equation*}
\qsigma(x) = \frac{1}{\sqrt{2\pi\sigma^2}} \, e^{-\frac{x^2}{2\sigma^2}} \text{ and }\Qsigma(x) = \int_x^\infty \qsigma(t) \, dt.
\end{equation*}
The corresponding cumulative distribution functions are denoted by $\Fcor(x)$ and $\Ferr(x)$, respectively, and are given by
\begin{equation*}
\begin{aligned}
\Fcor(x) &= \begin{cases}
  \frac{1-\Qsigma(1)-\Qsigma(x-1)}{1-\Qsigma(1)} & x\geq 0 \\
  0 & \text{otherwise}
\end{cases} \\
\Ferr(x) &= \begin{cases}
  \frac{\Qsigma(1)-\Qsigma(x+1)}{\Qsigma(1)} & x\geq 0 \\
  0 & \text{otherwise}.
\end{cases}
\end{aligned}
\end{equation*}

We now condition on the event $\calE_b$, i.e., the hard decision $\vchd$ is erroneous at exactly $b$ positions.
Since the noise is independent across positions and, by symmetry, every set of $b$ erroneous positions is equally likely, the $b$ reliabilities at erroneous positions are i.i.d.\ with density $\ferr$, the $n-b$ reliabilities at correct positions are i.i.d.\ with density $\fcor$, and the two groups are independent of each other.
This independence is what allows the joint densities below to be multiplied.
The $b$ reliabilities of positions where the hard decision is wrong are denoted by ${\betaos{1}\leq \betaos{2} \leq \ldots \leq \betaos{b}}$.
Similarly, the $n-b$ reliabilities of positions where the hard decision is correct are  denoted by ${\gammaos{1}\leq \gammaos{2} \leq \ldots \leq \gammaos{n-b}}$.
Since the received vector is assumed to be sorted such that ${\alpha_1\leq\ldots\leq\alpha_n}$, position $i$ of the sorted vector is the {$i$-th} smallest element of the merge of the two sequences $\vbeta$ and $\vgamma$.
Every statement about the error vector $\ve$ at a set of sorted positions can therefore be translated into a set of inequalities between the $\betaos{\cdot}$ and $\gammaos{\cdot}$, which is what \cref{app:cond} does.
Here, $\betaos{i}$ is the $i$-th order statistic of the $b$ i.i.d.\ reliabilities with density $\ferr$ and $\gammaos{j}$ is the $j$-th order statistic of the $n-b$ i.i.d.\ reliabilities with density $\fcor$.
Following \cite{david2003}, their marginal densities are given by
\begin{equation*}
\begin{aligned}
  f_{\betaos{i}}(u) &= \Theta(b,i)\, \ferr(u)[\Ferr(u)]^{i-1}[1-\Ferr(u)]^{b-i} \\
  f_{\gammaos{j}}(v) &=  \Theta(n-b,j)\, \fcor(v)[\Fcor(v)]^{j-1}[1-\Fcor(v)]^{n-b-j},
\end{aligned}
\end{equation*}
with $\Theta(g, h)\triangleq g\binom{g-1}{h-1}=\frac{g!}{(h-1)!\,(g-h)!}$ for the ${h\text{-th}}$ order statistic out of $g$ samples.
Consider the subsets ${\calI=\{i_1, \ldots, i_\mu\}\subseteq [b]}$ with ${|\calI|=\mu}$ and ${\calJ=\{j_1,\ldots, j_\nu\}\subseteq [n-b]}$ with $|\calJ|=\nu$, both sorted increasingly.
We denote the joint probability density functions of all $\betabos{b}{i}$ with $i\in\calI$ and of all $\gammabos{b}{j}$ with  $j\in\calJ$ in short by ${f_{\vbeta_{b,\calI}}(\vu_{\calI})\triangleq f_{\{\betaos{i}\}_{i\in\calI}}(\{u_i\}_{i\in\calI})}$ and ${f_{\vgamma_{n-b,\calJ}}(\vv_{\calJ})\triangleq f_{\{\gammaos{j}\}_{j\in\calJ}}(\{v_j\}_{j\in\calJ})}$. Here and in the following, $u_i$ denotes the integration variable associated with $\betaos{i}$ and $v_j$ the one associated with $\gammaos{j}$, both taking non-negative reliability values.
They can be calculated as in \cite{david2003} by
\begin{equation}\label{eq:joint_os_1}
\begin{aligned}
&f_{\vbeta_{b,\calI}}(\vu_{\calI}) = \\
&b!\, \left[\prod_{i\in\calI} \ferr(u_i)\right] \prod_{\ell=0}^{\mu} \left(\frac{[\Ferr(u_{i_{\ell+1}})-\Ferr(u_{i_\ell})]^{i_{\ell+1}-i_{\ell}-1} }{(i_{\ell+1}-i_{\ell}-1)!}\right)
\end{aligned}
\end{equation}
with the conventions $i_0=0$, $i_{\mu+1}=b+1$, $u_{i_0}=0$ and $u_{i_{\mu+1}}=\infty$, as well as
\begin{equation}\label{eq:joint_os_2}
\begin{aligned}
&f_{\vgamma_{n-b,\calJ}}(\vv_{\calJ}) =\\
& (n-b)!\, \left[\prod_{j\in\calJ} \fcor(v_j)\right] \prod_{\ell=0}^{\nu} \left(\frac{[\Fcor(v_{j_{\ell+1}})-\Fcor(v_{j_\ell})]^{j_{\ell+1}-j_{\ell}-1} }{(j_{\ell+1}-j_{\ell}-1)!}\right)
\end{aligned}
\end{equation}
with the conventions $j_0=0$, $j_{\nu+1}=n-b+1$, $v_{j_0}=0$ and $v_{j_{\nu+1}}=\infty$.

We now have all order statistics definitions to show how the terms from \cref{subsec:LER_OS} can be calculated.

\section{Evaluating the Conditional Probabilities}\label{app:cond}

In \eqref{eq:os_ler_chase}, \eqref{eq:os_ler_res_chase}, and \eqref{eq:os_ler_lw}, two types of probabilities need to be calculated.
The first is the probability of $i$ transmission errors $P(\mathcal{E}_i)$ which is determined by \eqref{eq:prob_i_flips}.
The second are conditional probabilities imposing a constraint on the error vector $\ve$ given the event $\mathcal{E}_i$ that $i$ transmission errors have occurred. 
We can calculate them using the above stated order-statistics functions which assume that the event $\mathcal{E}_b$ has occurred.

The conditional probability $P(\wham(\ve_{[p]}) \mathrel{\geq} b-t \mid \calE_{b})$ is required to get the \ac{LER} of Chase-II and restricted Chase decoding patterns  as in \eqref{eq:os_ler_chase}, \eqref{eq:os_ler_res_chase}.
The probability captures the event in which at most $t$ of the $b$ errors fall outside the $p$ least reliable positions, or equivalently, in which at least $b-t$ of them fall within the $p$ least reliable positions.
In other words, the reliabilities $\betaos{1},\ldots,\betaos{b-t}$ have a smaller absolute value than the reliabilities in the positions $[p+1,n]$.
The reliability with smallest absolute value in these positions without an error is $\gammaos{p+1-(b-t)}$, because $b-t$ of the $p$ least reliable positions are erroneous and the remaining $p-(b-t)$ are correct.
Therefore, the event is captured by $\betaos{b-t}\leq \gammaos{p+1-(b-t)}$ and it holds that
\begin{equation*}
\begin{aligned}
      &P(\wham(\ve_{[p]}) \mathrel{\geq} b-t \mid \calE_{b}) \\&= \int_{0}^{\infty}  f_{\betaos{b-t}} (u)\left(\int_{u}^{\infty} f_{\gammaos{p+t+1-b}}(v) \, dv\right) du,
\end{aligned}
\end{equation*}
as also given in \cite{agrawal2000,fossorier2002}.

Before explaining how to calculate the conditional probability $P(\ve_{[\imax_{\pat}]}=\pat_{[\imax_{\pat}]}\mid \calE_{t+w})$, it is helpful to provide an example.

\begin{example}
    Consider a toy example where the $(7,4)$ Hamming code with $t=1$  is decoded with $\npat=19$ logistic weight test patterns\footnote{The avid reader may notice that using $19$ test patterns to decode a perfect code with $16$ codewords does not make sense. However, this is just a toy example for illustration purposes.}.
    The pattern set $\Patlw$ then consists of all patterns with $\wlw(\pat)\leq 7$, listed in \cref{tab:lw_example} in the order $\prec$ of \cref{def:lw} and represented by their set $\calK$ of non-zero positions.
\begin{table}[t]
    \caption{The $\npat=19$ logistic-weight patterns $\pat=\one_{\calK}$ of the toy example, given by their set $\calK$ of non-zero positions and listed in increasing order with respect to $\prec$.}
    \label{tab:lw_example}
    \centering
    \small
    \begin{tabular}{@{}ll@{}}
        \toprule
        $\wlw(\pat)$ & sets $\calK$ of non-zero positions, in order of $\prec$ \\
        \midrule
        $0$ & $\,\,\varnothing$ \\
        $1$ & $\{1\}$ \\
        $2$ & $\{2\}$ \\
        $3$ & $\{3\}$, $\{1,2\}$ \\
        $4$ & $\{4\}$, $\{1,3\}$ \\
        $5$ & $\{5\}$, $\{2,3\}$, $\{1,4\}$ \\
        $6$ & $\{6\}$, $\{2,4\}$, $\{1,5\}$, $\{1,2,3\}$ \\
        $7$ & $\{7\}$, $\{3,4\}$, $\{2,5\}$, $\{1,6\}$, $\{1,2,4\}$ \\
        \bottomrule
    \end{tabular}
\end{table}
    We consider the additional probability of the last pattern, i.e., $\Padd_{\Patlw^\prime}(\pat)$ with $\pat= \one_{\{1,2,4\}}=\begin{bmatrix}
                1,1,0,1,0,0,0
            \end{bmatrix}$ and ${\Patlw^\prime=\Patlw\setminus\{\pat\}}$.
    From \cref{prop:ler_lw_patterns} and \eqref{eq:os_ler_lw}, we know that the additional probability is ${P(\ve_{[\imax_{\pat}]}=\pat_{[\imax_{\pat}]}\mid \calE_{t+w}) P(\calE_{t+w})}$ with $t=1$, $\imax_{\pat}=4$, $w=\wham(\pat)=3$, thus
\begin{equation*}
        \Padd_{\Patlw^\prime}(\pat)=P(\ve_{[4]}=\pat_{[4]}\mid \calE_{4}) P(\calE_{4}).
    \end{equation*}
The probability $P(\calE_{4})$ can be calculated with \eqref{eq:prob_i_flips}.
    To calculate the conditional probability $P(\ve_{[4]}=\pat_{[4]}\mid \calE_{4})$, we consider the order-statistics setup, where $b=4$ transmission errors have occurred.
    The condition $\ve_{[4]}=\pat_{[4]}$ can then be translated to conditions on the corresponding ordered reliabilities $\vbeta$, $\vgamma$ which were flipped and which were not flipped, respectively.
Since $b=4$ and $n-b=3$, the four reliabilities at erroneous positions are $\betabos{4}{1}\leq\ldots\leq\betabos{4}{4}$ and the three at correct positions are $\gammanbos{3}{1}\leq\gammanbos{3}{2}\leq\gammanbos{3}{3}$.
    For the pattern $\pat$, the conditions are illustrated in \cref{fig:os_merge}.
\begin{figure}[htb!]
    \centering
    \begin{tikzpicture}[
    x=1.05cm, y=1cm,
    varnode/.style={font=\footnotesize, anchor=south, inner sep=1.5pt},
    bitnode/.style={font=\footnotesize, anchor=north, inner sep=2pt},
    cond/.style={-{Straight Barb[scale=0.7]}, mittelblau, thick},
]

\fill[hellgrau] (0.55,-0.42) rectangle (4.45,0.52);

\draw[-{Straight Barb[scale=0.8]}, anthrazit] (0.2,0) -- (7.9,0)
    node[anchor=west, font=\footnotesize, anthrazit] {$\alpha$};

\foreach \p/\bit/\var in {%
    1/1/{\betabos{5}{1}},
    2/1/{\betabos{5}{2}},
    3/0/{\gammanbos{2}{1}},
    4/1/{\betabos{5}{3}},
    5/0/{},
    6/0/{},
    7/0/{}}{
    \draw[anthrazit] (\p,-0.1) -- (\p,0.1);
    \node[bitnode] at (\p,-0.14) {$\bit$};
    \node[varnode, mittelblau] at (\p,0.12) {$\var$};
}

\draw[decorate, decoration={brace, amplitude=4pt}, anthrazit]
    (4.75,0.30) -- (7.25,0.30);
\node[font=\footnotesize, anchor=south, anthrazit, align=center] at (6.1,0.44)
    {in any order\\ $\gammanbos{2}{2},\ \betabos{5}{4},\ \betabos{5}{5}$ };

\draw[cond] (4,1.4) -- (6.1,1.4);
\draw[mittelblau, thick] (4,0.94) -- (4,1.4);
\node[font=\footnotesize, anchor=south, mittelblau] at (5.05,1.44)
    {$\betabos{5}{3}\leq\gammanbos{2}{2}$};

\node[font=\footnotesize, anchor=north] at (2.5,-0.46)
    {constrained by $\ve_{[4]}=\pat_{[4]}$};
\node[font=\footnotesize, anchor=north] at (6,-0.46)
    {unconstrained};

\end{tikzpicture}
    \caption{Assignment of the order statistics $\vbeta$ and $\vgamma$ to the sorted positions for the toy example with $\pat=\one_{\{1,2,4\}}$, $b=4$ and $n-b=3$. The bottom row is $\pat$. Walking through the positions, a one is associated with the next $\beta$ and a zero with the next $\gamma$. Only the shaded positions $[\imax_{\pat}]=[4]$ are constrained by $\ve_{[4]}=\pat_{[4]}$, plus the closing condition linking position $4$ to the smallest correct reliability that is not constraint.}
    \label{fig:os_merge}
\end{figure}
The desired constraint $\ve_{[4]}=\pat_{[4]}$ can only constrain reliabilities in the positions $1$ to $4$.
The order of the last three reliabilities $\gammanbos{3}{2}$, $\gammanbos{3}{3}$ and $\betabos{4}{4}$ does not play a role since the condition $\ve_{[4]}=\pat_{[4]}$ does not cover these positions.
One further condition is needed at the largest non-zero position of $\pat$, i.e., position $4$, which has to be less reliable than each of the positions $5$ to $7$.
Only the correct positions $5$ to $7$ give a non-trivial condition, and since positions $1$ to $4$ already contain one of the three correct positions, the smallest reliability of a correct position beyond position $4$ is $\gammanbos{3}{2}$.
Hence, $\betabos{4}{3}\leq\gammanbos{3}{2}$ is required.
Some of the conditions are fulfilled by definition of order statistics, e.g., $\betabos{4}{1}\leq \betabos{4}{2}$.
Overall, the three conditions $\betabos{4}{2}\leq \gammanbos{3}{1}$, $\gammanbos{3}{1}\leq \betabos{4}{3}$ and $\betabos{4}{3}\leq \gammanbos{3}{2}$ have to hold.
We denote them by a set of conditions $\Scond$, i.e.,
\begin{equation*}
    \Scond = \{\betabos{4}{2}\leq \gammanbos{3}{1}, \gammanbos{3}{1}\leq \betabos{4}{3}, \betabos{4}{3}\leq \gammanbos{3}{2}\}.
\end{equation*}
These conditions involve the two $\beta$ variables with indices $\Icond=\{2,3\}$ and the two $\gamma$ variables with indices $\Jcond=\{1,2\}$, thus $\df=|\Icond|+|\Jcond|=4$ variables have to be considered in the following integral.
The set of non-negative tuples for which the conditions in $\Scond$ hold is denoted by $\Rcond{\df}$. We then have
\begin{equation*}
\begin{aligned}
        &P(\ve_{[4]}=\pat_{[4]}\mid \calE_{4}) =\\ &\int\limits_{\Rcond{4}} f_{\vbeta_{4,\{2,3\}}}(\vu_{\{2,3\}})f_{\vgamma_{3,\{1,2\}}}(\vv_{\{1,2\}}) \, d\vu_{\{2,3\}}d\vv_{\{1,2\}},
\end{aligned}
\end{equation*}
where $f_{\vbeta_{4,\{2,3\}}}(\vu_{\{2,3\}})$ is the joint probability density function of $\betabos{4}{2} $ and $ \betabos{4}{3}$, and $f_{\vgamma_{3,\{1,2\}}}(\vv_{\{1,2\}})$ is the joint probability density function of $\gammanbos{3}{1}$ and $\gammanbos{3}{2}$, both given by \eqref{eq:joint_os_1} and \eqref{eq:joint_os_2}.

\end{example}

Keeping this example in mind, we can now generalize the calculation of $P(\ve_{[\imax_{\pat}]}=\pat_{[\imax_{\pat}]}\mid \calE_{t+w})$ for a pattern $\pat\in\Patlw\setminus\{\zero\}$ with Hamming weight $w>0$.

\begin{calc}\label{calc:lw}
Consider a pattern $\pat\in\Patlw\setminus\{\zero\}$  with Hamming weight $w>0$ as in \cref{prop:ler_lw_patterns} within the order-statistics setup with $b=t+w$ transmission errors, and write $\rho_i$ for the $i$-th entry of $\pat$.
The set $\Scond$ of conditions imposed by $\ve_{[\imax_{\pat}]}=\pat_{[\imax_{\pat}]}$ is obtained as follows.
\begin{enumerate}
    \item Assign an order-statistics variable to every position $i\in[\imax_{\pat}]$: if $\rho_i=1$, position $i$ carries $\betaos{r}$ with $r=\wham(\pat_{[i]})$; if $\rho_i=0$, position $i$ carries $\gammaos{s}$ with $s=i-\wham(\pat_{[i]})$.
    \item Chain the assigned variables by $\leq$ in order of increasing position. Since the received vector is sorted by reliability, this chain is exactly the statement $\ve_{[\imax_{\pat}]}=\pat_{[\imax_{\pat}]}$.
    \item Append the closing condition $\betaos{w}\leq\gammabos{b}{\imax_{\pat}-w+1}$.
    Position $\imax_{\pat}$ carries $\betaos{w}$ because $\rho_{\imax_{\pat}}=1$, and $\imax_{\pat}-w$ of the positions in $[\imax_{\pat}]$ are not flipped, so the smallest reliability of a correct position outside $[\imax_{\pat}]$ is $\gammabos{b}{\imax_{\pat}-w+1}$.
    \item Drop all conditions between two variables of the same type, since $\betaos{r}\leq\betaos{r+1}$ and $\gammaos{s}\leq\gammaos{s+1}$ hold by definition of the order statistics.
\end{enumerate}
The indices of the order-statistics variables $\beta$ and $\gamma$ that are constrained by $\Scond$ are denoted by $\Icond$ and $\Jcond$, respectively. We define $\df\triangleq |\Icond|+|\Jcond|$, which is the total number of constrained variables, and satisfies $\df\leq\imax_{\pat}$.
The conditional probability $P(\ve_{[\imax_{\pat}]}=\pat_{[\imax_{\pat}]}\mid \calE_{t+w})$ can then be calculated by
\begin{equation}\label{eq:lw_os_calc}
    \int\limits_{\Rcond{\df}} f_{\vbeta_{b,\Icond}}(\vu_{\Icond})f_{\vgamma_{n-b,\Jcond}}(\vv_{\Jcond}) \, d\vu_{\Icond}d\vv_{\Jcond},
\end{equation}
where $\Rcond{\df}$ denotes the subset of $\mathbb{R}^{\df}_{\geq0}$ for which the conditions $\Scond$ are fulfilled. The joint probability density functions  $f_{\vbeta_{b,\Icond}}(\vu_{\Icond})$ and $f_{\vgamma_{n-b,\Jcond}}(\vv_{\Jcond})$ are the ones defined in \eqref{eq:joint_os_1} and \eqref{eq:joint_os_2}, and they may be multiplied because the two groups of reliabilities are independent given $\calE_b$.
\end{calc}

\section{Monte Carlo Evaluation of the Order-Statistics Integrals}\label{app:mc}

Solving multidimensional integrals, such as the ones in \eqref{eq:lw_os_calc}, numerically based on the multidimensional functions is complex. 
Therefore, we calculate the integral more efficiently using Monte Carlo integration \cite{robert2004}.
Assume that we can sample from the joint distributions $f_{\vbeta_{b,\Icond}}(\vu_{\Icond})$ and $f_{\vgamma_{n-b,\Jcond}}(\vv_{\Jcond})$ of \eqref{eq:joint_os_1} and \eqref{eq:joint_os_2}.
We can then estimate the probability via Monte Carlo simulation where in each step we draw one sample from each of the two distributions, independently as justified before, and check whether the conditions $\Scond$ are fulfilled.
The probability is then the number of tries where $\Scond$ was fulfilled over the number of overall tries.
Note that only the conditional probabilities are estimated in this way, whereas the probability $P(\calE_b)$ of the conditioning event is evaluated in closed form with \eqref{eq:prob_i_flips}.
The estimator therefore does not suffer from the rare-event problem of a direct simulation as the conditional probabilities are not small, even where the resulting \ac{LER} is of the order of $10^{-6}$.

Sampling from the joint distributions can be done using the following property from \cite{devroye1986}. The order statistics ${x_{(1)}\leq\ldots\leq x_{(m)}}$ that correspond to the sequence $x_1,\ldots,x_{m}$ of i.i.d. random variables with \ac{CDF} $F$ can be generated as
\begin{equation*}
    F^{-1}(U_{(1)}), \ldots, F^{-1}(U_{(m)}),
\end{equation*}
where $F^{-1}$ is the inverse \ac{CDF} and $U_{(1)}\leq\ldots\leq U_{(m)}$ are the order statistics of $m$ i.i.d.\ $\mathrm{Unif}(0,1)$ random variables.
The \acp{CDF} $\Ferr$ and $\Fcor$ can be inverted in closed form as
\begin{equation*}
\begin{aligned}
    \Ferr^{-1}(U) &= \Qsigma^{-1}\!\left(\Qsigma(1)(1-U)\right) - 1, \\
    \Fcor^{-1}(U) &= \Qsigma^{-1}\!\left(\left(1-\Qsigma(1)\right)(1-U)\right) + 1,
\end{aligned}
\end{equation*}
for an argument $0\leq U\leq 1$, so that a sample only requires one evaluation of the inverse Gaussian tail function.
The uniform order statistics are drawn sequentially using the fact that $U_{(k)}$ has the marginal distribution $\mathrm{Beta}(k, m+1-k)$ and that, given $U_{(k-1)}$, the remaining order statistics are those of $m-k+1$ i.i.d. $\mathrm{Unif}(U_{(k-1)},1)$ variables.
Only the order statistics with indices in $\Icond$ and $\Jcond$ have to be generated, since \eqref{eq:lw_os_calc} does not involve the others.

\end{document}